\title{Rule Formats for Nominal Process Calculi\footnote{Research partially supported by
    the project {Nominal SOS} (nr.~141558-051) of the Icelandic Research Fund, the project
    001-ABEL-CM-2013 within the {NILS} Science and Sustainability Programme, the Spanish
    projects N-Greens Software (S2013/ICE-2731), {TRACES} (TIN2015-67522-C3-3-R),
    StrongSoft (TIN2012-39391-C04) and RISCO (TIN2015-71819-P), and the projects {RACCOON}
    (H2020-EU~714729) and MATHADOR (COGS 724.464) of the European Research Council, and
    the Spanish addition to MATHADOR (TIN2016-81699-ERC).}}
\titlerunning{Rule Formats for Nominal Process Calculi}
\author[1]{Luca Aceto}
\author[1,2,3]{Ignacio F\'abregas}
\author[1,3]{\'Alvaro Garc\'ia-P\'erez}
\author[1]{Anna Ing\'olfsd\'ottir}
\author[2]{Yolanda Ortega-Mall\'en}
\affil[1]{ICE-TCS, School of Computer Science, Reykjavik University (Iceland) \\
  \texttt{\{luca,annai\}@ru.is}}
\affil[2]{Departamento de Sistemas Inform\'aticos y Computaci\'on, Universidad Complutense de Madrid (Spain)\\
  \texttt{\{fabregas,yolanda\}@ucm.es}}
\affil[3]{IMDEA Software Institute, Madrid (Spain) \\
  \texttt{alvaro.garcia.perez@imdea.org}}
\authorrunning{L. Aceto and I. F\'abregas and A. Garc\'ia-P\'erez and A. Ing\'olfsdóttir
  and Y. Ortega-Mall\'en}
\subjclass{D.3.1 Formal Definitions and Theory, F.1.1 Models of Computation, F.1.2 Modes
  of Computation, F.3.1 Specifying and Verifying and Reasoning about Programs, F.3.2
  Semantics of Programming Languages,}
\keywords{nominal sets, nominal structural operational semantics, process algebra, nominal
  transition systems, scope opening, rule formats}
\newcommand{\ie}{i.e.}
\newcommand{\eg}{e.g.}
\newcommand{\etc}{etc.}
\newcommand{\etal}{\emph{et al.}}
\newcommand{\Atom}{\mathbb{A}}
\newcommand{\One}{\textup{\bf 1}}
\newcommand{\ASet}{A}
\newcommand{\asort}{\alpha}
\newcommand{\Var}{\mathcal{V}}
\newcommand{\Sta}{S}
\newcommand{\Res}{R}
\newcommand{\Proc}{\mathsf{pr}}
\newcommand{\Act}{\mathsf{ac}}
\newcommand{\Chan}{\mathsf{ch}}
\newcommand{\nTerm}[1]{\mathbb{N}(#1)}
\newcommand{\rTerm}[1]{\mathbb{T}(#1,\Var)}
\newcommand{\gTerm}[1]{\mathbb{T}(#1)}
\newcommand{\Sort}{\mathsf{S}}
\newcommand{\rel}[1]{\buildrel #1\over\longrightarrow}
\newcommand{\nullPA}{\mathit{null}}
\newcommand{\tauPA}[1][]{
	\ifthenelse{\equal{#1}{}}{{\mathit{tau}}}{{\mathit{tau}(#1)}}}
\newcommand{\inPA}[2][]{
	\ifthenelse{\equal{#1}{}}{{\mathit{in}}}{{\mathit{in}(#1,#2)}}}
\newcommand{\outPA}[3][]{
	\ifthenelse{\equal{#1}{}}{{\mathit{out}}}{{\mathit{out}(#1,#2,#3)}}}
\newcommand{\newPA}[1][]{
	\ifthenelse{\equal{#1}{}}{{\mathit{new}}}{{\mathit{new}(#1)}}}
\newcommand{\parPA}[2][]{
	\ifthenelse{\equal{#1}{}}{{\mathit{par}}}{{\mathit{par}(#1,#2)}}}
\newcommand{\sumPA}[2][]{
	\ifthenelse{\equal{#1}{}}{{\mathit{sum}}}{{\mathit{sum}(#1,#2)}}}
\newcommand{\repPA}[1][]{
	\ifthenelse{\equal{#1}{}}{{\mathit{rep}}}{{\mathit{rep}(#1)}}}
\newcommand{\tauAA}{{\mathit{tauA}}}
\newcommand{\inAA}[2][]{
	\ifthenelse{\equal{#1}{}}{{\mathit{inA}}}{{\mathit{inA}(#1,#2)}}}
\newcommand{\outAA}[2][]{
	\ifthenelse{\equal{#1}{}}{{\mathit{outA}}}{{\mathit{outA}(#1,#2)}}}
\newcommand{\boutAA}[2][]{
	\ifthenelse{\equal{#1}{}}{{\mathit{boutA}}}{{\mathit{boutA}(#1,#2)}}}
\newcommand{\resAF}[2]{{(#1,#2)}}
\newcommand{\tr}[2]{(#1\,#2)}
\newcommand{\peract}[2]{#1\cdot #2}
\newcommand{\susp}[2]{#1\bullet #2}
\newcommand{\inj}{\mathrm{inj}}
\newcommand{\fs}{\mathrm{fs}}
\newcommand{\dom}{\mathrm{dom}}
\newcommand{\R}{\mathcal{R}}
\newcommand{\Ru}{\textsc{Ru}}
\newcommand{\NT}[1]{\mathit{NT}[\![#1]\!]}
\newcommand{\fra}{\!\mathrel{\not\!{\not\hspace{-1pt}\approx}}}
\newcommand{\pset}{\mathcal{P}}
\newcommand{\bn}{\mathrm{bn}}
\newcommand{\supp}{\mathrm{supp}}
\newcommand{\perm}[1]{\mathrm{Perm}\;#1}
\newcommand{\permso}[1]{\mathrm{Perm}_s\,#1}
\newcommand{\nf}[1]{\langle #1\rangle_\mathit{nf}}
\newcommand{\SigmaNTS}{\Sigma_{\textup{NTS}}}
\theoremstyle{plain}
\begin{document}

\maketitle

\begin{abstract}
  The nominal transition systems (NTSs) of Parrow et al.\ describe the operational
  semantics of nominal process calculi. We study NTSs in terms of the nominal residual
  transition systems (NRTSs) that we introduce. We provide rule formats for the
  specifications of NRTSs that ensure that the associated NRTS is an NTS and apply them to
  the operational specification of the early pi-calculus. Our study stems from the recent
  Nominal SOS of Cimini et al.\ and from earlier works in nominal sets and nominal logic
  by Gabbay, Pitts and their collaborators.
\end{abstract}

\section{Introduction}
\label{sec:intro}

The goal of this paper is to develop the foundations of a framework for studying the
meta-theory of structural operational semantics (SOS)~\cite{Plo04} for process calculi
with names and name-binding operations, such as the $\pi$-calculi~\cite{SW01}. To this
end, we build on the large body of work on rule formats for SOS, as surveyed
in~\cite{AFV01,MRG07}, and on the nominal techniques of Gabbay, Pitts and their
co-workers~\cite{UPG04,CP07,GM09,Pit13}.

Rule formats provide syntactic templates guaranteeing that the models of the calculi,
whose semantics they specify, enjoy some desirable properties. A first design decision
that has to be taken in developing a theory of rule formats for a class of languages is
therefore the choice of the semantic objects specified by the rules. The target semantic
model we adopt in our study is that of \emph{nominal transition systems} (NTSs), which
have been introduced by Parrow \etal\ in~\cite{PBEGW15,ParrowWBE17} as a uniform model to
describe the operational semantics of a variety of calculi with names and name-binding
operations.  Based on this choice, a basic sanity criterion for a collection of rules
describing the operational semantics of a nominal calculus is that they specify an NTS,
and we present a rule format guaranteeing this property (Thm.~\ref{the:alpha-conversion}).

As a first stepping stone in our study, we introduce \emph{nominal residual transition
  systems} (NRTSs), and study NTSs in terms of NRTSs
(Section~\ref{sec:preliminaries}). More specifically, NRTSs enjoy one desirable property
in the setting of nominal calculi, namely that their transition relation is equivariant
(which means that it treats names uniformly). NTSs are NRTSs that, in addition to having
an equivariant transition relation, satisfy a property Parrow \etal\ call
\emph{alpha-conversion of residuals} (see Def.~\ref{def:nts} for the details). %
The latter property formalises a key aspect of calculi in which names can be scoped to
represent local resources.  To wit, one crucial feature of the $\pi$-calculus is scope
opening~\cite{MPW92}. Consider a transition $p\rel{\overline{a}(\nu b)}p'$ in which a
process $p$ exports a private/local channel name $b$ along channel $a$. Since the name $b$
is local, it `can be subject to alpha-conversion'~\cite{PBEGW15} and the transitions
$p\rel{\overline{a}(\nu c)}p\{b/c\}$ should also be present for each `fresh name' $c$.

In contrast to related work \cite{CMRG12,FG07}, our approach uses nominal terms to connect
the specification system with the semantic model. This has the advantage of capturing the
requirement that transitions be `up to alpha-equivalence' (typical in nominal calculi)
without instrumenting alpha-conversion explicitly in the specification system.

We specify an NRTS by means of a nominal residual transition system specification (NRTSS),
which describes the syntax of a nominal calculus in terms of a nominal signature
(Section~\ref{sec:terms}) and its semantics by means of a set of inference rules
(Section~\ref{sec-specification-nrts}). We develop the basic theory of the NRTS/NRTSS
framework, building on the nominal algebraic datatypes of Pitts~\cite{Pit13} and the
nominal rewriting framework of Fernandez and Gabbay~\cite{FG07}. Based on this framework,
we provide rule formats \cite{AFV01,MRG07} for NRTSSs (Section~\ref{sec-rule-format-nrts})
that ensure that the induced transition relation is equivariant
(Thm.~\ref{thm:rule-format-equivariance}) and enjoys alpha-conversion of residuals
(Thm.~\ref{the:alpha-conversion}), and is therefore an NTS.
Section~\ref{sec:early-pi-calculus} presents an example of application of our rule formats
to the setting of the $\pi$-calculus, and Section~\ref{sec:conclusions} discusses avenues
for future work, as well as related work, and concludes.


\section{Preliminaries}
\label{sec:preliminaries}

\paragraph*{Nominal sets}

We follow earlier foundational work by Gabbay and Pitts on nominal sets in
\cite{GP02,Pit13,Pit16}. We assume a countably infinite set $\Atom$ of \emph{atoms} and
consider $\perm{\Atom}$ as the group of \emph{finite permutations of atoms} (hereafter
\emph{permutations}) ranged over by $\pi$, where we write $\iota$ for the \emph{identity},
$\circ$ for \emph{composition} and $\pi^{-1}$ for the \emph{inverse} of permutation
$\pi$. We are particularly interested in \emph{transpositions} of two atoms: $\tr{a}{b}$
stands for the permutation that swaps $a$ with $b$ and leaves all other atoms fixed. Every
permutation $\pi$ is equal to the composition of a finite number of transpositions, \ie\
$\pi=\tr{a_1}{b_1}\circ\ldots\circ\tr{a_n}{b_n}$ with $n\geq 0$.

An \emph{action} of the group $\perm{\Atom}$ on a set $S$ is a binary operation mapping
each $\pi\in\perm{\Atom}$ and $s\in S$ to an element $\peract{\pi}{s}\in S$, and
satisfying the identity law $\peract{\iota}{s}=s$ and the composition law
$\peract{(\pi_1\circ\pi_2)}{s}=\peract{\pi_1}{(\peract{\pi_2}{s})}$. A
\emph{$\perm{\Atom}$-set} is a set equipped with an action of $\perm{\Atom}$.

We say that a set of atoms $A$ \emph{supports} an object $s$ iff $\peract{\pi}{s}=s$ for
every permutation $\pi$ that leaves each element $a\in A$ invariant. In particular, we are
interested in sets all of whose elements have finite support (Def.~2.2 of \cite{Pit13}).

\begin{definition}[Nominal sets] A \emph{nominal set} is a $\perm{\Atom}$-set all of whose
	elements are finitely supported.
\end{definition}

For each element $s$ of a nominal set, we write $\supp(s)$ for the least set that supports
$s$, called the \emph{support} of $s$. (Intuitively, the action of permutations on a set
$S$ determines that a finitely supported $s\in S$ only depends on atoms in $\supp(s)$, and
no others.)  The set $\Atom$ of atoms is a nominal set when $\peract{\pi}{a}=\pi a$ since
$\supp(a)=\{a\}$ for each atom $a\in\Atom$. The set $\perm{\Atom}$ of finite permutations
is also a nominal set where the permutation action on permutations is given by
conjugation, \ie\ $\peract{\pi}{\pi'}=\pi\circ\pi'\circ\pi^{-1}$, and the support of a
permutation $\pi$ is $\supp(\pi)=\{a\mid \pi a\not=a\}$.

Given two $\perm{\Atom}$-sets $S$ and $T$ and a function $f:S\to T$, the action of
permutation $\pi$ on function $f$ is given by conjugation, \ie\
$(\peract{\pi}{f})(s)=\peract{\pi}{f(\peract{\pi^{-1}}{s})}$ for each $s\in S$. We say
that a function $f:S\to T$ is \emph{equivariant} iff
$\peract{\pi}{f(s)}=f(\peract{\pi}{s})$ for every $\pi\in\perm{\Atom}$ and every $s\in
S$.
The intuition is that an equivariant function $f$ is atom-blind, in that $f$ does not
treat any atom preferentially. It turns out that a function $f$ is equivariant iff
$\supp(f)=\emptyset$ (Rem.~2.13 of \cite{Pit13}). The function $\supp$ is equivariant
(Prop.~2.11 of \cite{Pit13}).

Let $S$ be a $\perm{\Atom}$-set, we write $S_\fs$ for the nominal set that contains the
elements in $S$ that are finitely supported. The \emph{nominal function set} between
nominal sets $S$ and $T$ is the nominal set $S\to_\fs T$ of finitely supported functions
from $S$ to $T$---be they equivariant or not. Let $S_1$ and $S_2$ be nominal sets. The
product $S_1\times S_2$ is a nominal set (Prop.~2.14 of \cite{Pit13}). The permutation
action for products is given componentwise (Eq~(1.12) of \cite{Pit13}).

An element $s_1\in S_1$ \emph{is fresh in}
$s_2\in S_2$, written $s_1\#s_2$, iff $\supp(s_1)\cap\supp(s_2)=\emptyset$. The freshness
relation is equivariant (Eq.~(3.2) of \cite{Pit13}).

Finally, we consider \emph{atom abstractions} (Sec.~4 of \cite{Pit13}), which represent
alpha-equivalence classes of elements.

\begin{definition}[Atom abstraction]
  \label{def:atom-abstraction}
  Given a nominal set $S$, the \emph{atom abstraction} of atom $a$ in element $s\in S$,
  written $\langle a\rangle s$, is the $\perm{\Atom}$-set
  $\langle a\rangle s=\{(b,\peract{\tr{b}{a}}{s})\mid b=a\lor b\# s\}$, whose permutation
  action is
  $ \peract{\pi}{\langle a\rangle s}=\{
  (\peract{\pi}{b},\peract{\pi}{(\peract{\tr{b}{a}}{s})})\mid
  \peract{\pi}{b}=\peract{\pi}{a}\lor \peract{\pi}{b}\# \peract{\pi}{s}\}$.

  We write $[\Atom]S$ for the set of \emph{atom abstractions} in elements of $S$, which is
  a nominal set (Def.~4.4 of \cite{Pit13}), since
  $\supp(\langle a\rangle s)=\supp(s)\setminus\{a\}$ for each atom $a$ and element
  $s\in S$.
\end{definition}

\paragraph*{Nominal Transition Systems}
Nominal transitions systems adopt the state/residual presentation for transitions of
\cite{BP09}, where a residual is a pair consisting of an action and a state. In
\cite{PBEGW15}, Parrow \etal\ develop modal logics for process algebras à la
Hennessy-Milner. Here we are mainly interested in the transition relation and we adapt
Definition~1 in \cite{PBEGW15} by removing the predicates. We write $\pset_\omega(\Atom)$
for the \emph{finite power set} of $\Atom$.

\begin{definition}[Nominal transition system]
  \label{def:nts}
  A \emph{nominal transition system} (NTS) is a quadruple $(S,\mathit{Act},\bn,\rel{})$
  where $S$ and $\mathit{Act}$ are nominal sets of \emph{states} and \emph{actions}
  respectively, $\bn:\mathit{Act}\to\pset_\omega(\Atom)$ is an equivariant function that
  delivers the \emph{binding names} in an action, and
  ${\rel{}}\subseteq S\times(\mathit{Act}\times S)$ is an equivariant binary transition
  relation from states to \emph{residuals} (we let $\mathit{Act}\times S$ be the set of
  residuals). The function $\bn$ is such that $\bn(\ell)\subseteq\supp(\ell)$ for each
  $\ell\in \mathit{Act}$. We often write $p\rel{}(\ell,p')$ in lieu of
  $(p,(\ell,p'))\in{\rel{}}$.

  Finally, the transition relation $\rel{}$ must satisfy \emph{alpha-conversion of
    residuals}, that is, if ${a\in \bn(\ell)}$, $b\#(\ell,p')$ and $p\rel{}(\ell,p')$ then
  also $p\rel{}(\peract{\tr{a}{b}}{\ell},\peract{\tr{a}{b}}{p'})$, or equivalently
  $p\rel{}\peract{\tr{a}{b}}{(\ell,p')}$.
\end{definition}

We will consider an NTS (without its associated binding-names function $\bn$) as a
particular case of a nominal residual transition system, which we introduce next.
\begin{definition}[Nominal residual transition system]
  \label{def:nrts}
  A \emph{nominal residual transition system} (NRTS) is a triple $(\Sta,\Res,\rel{})$
  where $\Sta$ and $\Res$ are nominal sets, and where ${\rel{}}\subseteq S\times R$ is an
  equivariant binary transition relation. We say $\Sta$ is the set of \emph{states} and
  $\Res$ is the set of \emph{residuals}.
\end{definition}
The connection between NTSs and NRTSs will be explained in more detail in
Section~\ref{sec-rule-format-nrts}.

\section{Nominal terms}
\label{sec:terms}

This section is devoted to the notion of nominal terms, which are syntactic objects that
make use of the atom abstractions of Definition~\ref{def:atom-abstraction} and represent
terms up to alpha-equivalence. As a first step, we introduce raw terms, devoid of any notion of alpha-equivalence.
Our raw terms resemble those from the literature,
mainly \cite{Pit13,UPG04,FG07,CP07}, but with some important differences. In particular,
our terms include both variables (\ie\ unknowns) and moderated terms (\ie\ explicit
permutations over raw terms), and we consider atom and abstraction sorts. (The raw terms
of \cite{Pit13} do not include moderated terms, and the ones in \cite{UPG04,FG07} only
consider moderated variables. In \cite{CP07} the authors consider neither atom nor
abstraction sorts.) We also adopt the classic presentation of free algebras and term
algebras in \cite{GTWW77,BS00} in a different way from that in \cite{CP07,Pit13}. The raw
terms correspond to the standard notion of free algebra over a signature generated by a
set of variables. We then adapt the $\Sigma$-structures of \cite{CP07} to our sorting
schema. Finally, the nominal terms are the interpretations of the ground terms in the
initial $\Sigma$-structure; they coincide with the nominal algebraic
terms of \cite{Pit13}.

\begin{definition}[Nominal signature and nominal sort]
  A \emph{nominal signature} (or simply a \emph{signature}) $\Sigma$ is a triple
  $(\Delta,\ASet,F)$ where $\Delta=\{\delta_1,\ldots,\delta_n\}$ is a finite set of
  \emph{base sorts}, $\ASet$ is a countable set of \emph{atom sorts}, and $F$ is a finite
  set of \emph{function symbols}. The \emph{nominal sorts} over $\Delta$ and $\ASet$ are
  given by the grammar
  $\sigma~::=~\delta\mid \asort \mid [\asort]\sigma\mid \sigma_1 \times \ldots \times
  \sigma_k,$
  with $k\geq 0$, $\delta\in\Delta$ and $\asort\in\ASet$. The sort $[\asort]\sigma$ is the
  \emph{abstraction sort}. Symbol $\times$ denotes the \emph{product sort}, which is
  associative; $\sigma_1\times\ldots\times\sigma_k$ stands for the sort of the empty
  product when $k=0$, which we may write as $\One$. We write $\Sort$ for the set of
  nominal sorts. We arrange the function symbols in $F$ based on the sort of the data that
  they produce. We write $f_{ij}\in F$ with $1\leq i\leq n$ and $1\leq j \leq m_i$ such
  that $f_{ij}$ has arity $\sigma_{ij}\to\delta_i$, where $\delta_i$ is a base sort.
\end{definition}

The theory of nominal sets extends to the case of (countably) many-sorted atoms (see
Sec.~4.7 in \cite{Pit13}). We assume that $\Atom$ contains a countably infinite
collection of atoms $a_\asort$, $b_\asort$, $c_\asort$,~\ldots\ for each atom sort
$\asort$ such that the sets of atoms $\Atom_\asort$ of each sort are mutually disjoint. We
write
$\permso{\Atom}=\{\pi\in\perm{\Atom}\mid \forall \asort\in\ASet.\, \forall
a\in\Atom_\asort.\;\pi\,a\in\Atom_\asort\}$
for the subgroup of finite permutations that respect the sorting. The sorted nominal sets
are the $\permso{\Atom}$-sets whose elements are finitely supported. In the sequel we may
drop the $s$ subscript in $\permso{\Atom}$ and omit the `sorted' epithet from `sorted
nominal sets'.

We let $\Var$ be a set that contains a countably infinite collection of \emph{variable
  names} (variables for short) $x_\sigma$, $y_\sigma$, $z_\sigma$, \ldots\ for each sort
$\sigma$, such that the sets of variables $\Var_\sigma$ of each sort are mutually
disjoint. We also assume that $\Var$ is disjoint from $\Atom$.

\begin{definition}[Raw terms]
  \label{def:raw-terms}
  Let $\Sigma=(\Delta,A,F)$ be a signature. The set of \emph{raw terms over signature
    $\Sigma$ and set of variables $\Var$} (\emph{raw terms} for short) is given by the
  grammar
  \begin{displaymath}
    \begin{array}{rcl}
      t_\sigma ::=&
                    x_\sigma\mid
                    a_\asort\mid
                    (\susp{\pi}{t_\sigma})_\sigma\mid
                    ([a_\asort]t_\sigma)_{[\asort]\sigma}\mid
                    (t_{\sigma_1},\ldots,t_{\sigma_k})_{\sigma_1\times\ldots\times\sigma_k} \mid
                    (f_{ij}(t_{\sigma_{ij}}))_{\delta_i},
    \end{array}
  \end{displaymath}
  where term $x_\sigma$ is a \emph{variable} of sort $\sigma$, term $a_\alpha$ is an
  \emph{atom} of sort $\alpha$, term $(\susp{\pi}{t_\sigma})_\sigma$ is a \emph{moderated
    term} (\ie\ the explicit, or delayed, permutation $\pi$ over term $t_\sigma$), term
  $([a_\asort]t_\sigma)_{[\asort]\sigma}$ is the \emph{abstraction of atom $a_\asort$ in
    term $t_\sigma$}, term
  $(t_{\sigma_1},\ldots,t_{\sigma_k})_{\sigma_1\times\ldots\times\sigma_k}$ is the
  \emph{product of terms} $t_{\sigma_1}$, \ldots, $t_{\sigma_k}$, and term
  $(f_{ij}(t_{\sigma_{ij}}))_{\delta_i}$ is the \emph{datum of base sort $\delta_i$
    constructed from term $t_{\sigma_{ij}}$ and function symbol
    $f_{ij} : \sigma_{ij}\to\delta_i$}.  When they are clear from the context or
  immaterial, we leave the arities and sorts implicit and write $x$, $a$, $\susp{\pi}{t}$,
  $[a]t$, $(t_1,\ldots,t_k)$, $f(t)$, \etc
\end{definition}

The raw terms are the inhabitants of the carrier of the free algebra over the set of
variables $\Var$ and over the $\Sort$-sorted conventional signature that consists of the
function symbols in $F$, together with a constant symbol for each atom $a_\alpha$, a unary
symbol that produces moderated terms for each permutation $\pi$ and each sort $\sigma$, a
unary symbol that produces abstractions for each atom $a_\alpha$ and sort $\sigma$, and a
$k$-ary symbol that produces a product of sort $\sigma_1\times\ldots\times\sigma_k$ for
each sequence of sorts $\sigma_1$, \ldots, $\sigma_k$. (See \cite{GTWW77} for a classic
presentation of term algebras, initial algebra semantics and free algebras.)

We write $\rTerm{\Sigma}_\sigma$ for the set of raw terms of sort $\sigma$. A raw term $t$
is \emph{ground} iff no variables occur in $t$. We write $\gTerm{\Sigma}_\sigma$ for the
set of ground terms of sort $\sigma$. The sets of raw terms (resp.\ ground terms) of each
sort are mutually disjoint as terms carry sort information. Therefore we sometimes
identify the family $(\rTerm{\Sigma}_\sigma)_{\sigma\in\Sort}$ of $\Sort$-indexed raw
terms and the family $(\gTerm{\Sigma}_\sigma)_{\sigma\in\Sort}$ of $\Sort$-indexed ground
terms with their respective ranges $\bigcup_{\sigma\in\Sort}\rTerm{\Sigma}_\sigma$ and
$\bigcup_{\sigma\in\Sort}\gTerm{\Sigma}_\sigma$, which we abbreviate as $\rTerm{\Sigma}$
and $\gTerm{\Sigma}$ respectively.

The set $\rTerm{\Sigma}$ of raw terms is a nominal set, with the $\perm{\Atom}$-action and
the support of a raw term given by:
\begin{displaymath}
\begin{array}{cc}\hspace{-.3cm}
\begin{array}[t]{rcl}
\peract{\pi}{x}&=&x\\
\peract{\pi}{a}&=&\pi\,a\\
\peract{\pi}{(\susp{\pi_1}{t})}&=&\susp{(\peract{\pi}{\pi_1})}
{(\peract{\pi}{t})}\\
\peract{\pi}{[a]t}&=&[\pi\,a](\peract{\pi}{t})\\
\peract{\pi}{(t_1,\ldots,t_k)}&=&
(\peract{\pi}{t_1},\ldots,\peract{\pi}{t_k})\\
\peract{\pi}{(f(t))}&=&
f(\peract{\pi}{t}),\\[8pt]
\end{array}&
\begin{array}[t]{rcl}
\supp(x)&=&\emptyset\\
\supp(a)&=&\{a\}\\
\supp(\susp{\pi}{t})&=&\supp(\pi)\cup\supp(t)\\
\supp([a](t))&=&\{a\}\cup\supp(t)\\
\supp((t_1,\ldots,t_k))&=&\supp(t_1)\cup\ldots\cup\supp(t_k)\\
\supp(f(t))&=&\supp(t).
\end{array}
\end{array}
\end{displaymath}

It is straightforward to check that the permutation action for raw terms is
sort-preserving (remember that permutations are also sort-preserving). The set $\gTerm{\Sigma}$ of ground terms is also a nominal set since it is closed with respect to the $\perm{\Atom}$-action given above.

\begin{example}[$\pi$-calculus]
  \label{ex:pi-calculus}
  Consider a signature $\Sigma$ for the $\pi$-calculus \cite{SW01,CMRG12} given by a
  single atom sort $\Chan$ of channel names, and base sorts $\Proc$ and $\Act$ for
  processes and actions respectively. The function symbols (adapted from \cite{SW01}) are
  the following:
  \begin{displaymath}
    \begin{array}{lll}
      F=\{
      \begin{array}[t]{l}
	\nullPA:\One\to \Proc,\\
	\tauPA[]:\Proc\to \Proc,\\
	\inPA[]{}:(\Chan\times [\Chan]\Proc)\to \Proc,\\
	\outPA{}{}{}:(\Chan\times \Chan\times \Proc)\to
	\Proc,
      \end{array}
      \begin{array}[t]{l}
	\parPA[]{}:(\Proc\times \Proc)\to \Proc,\\
	\sumPA[]{}:(\Proc\times \Proc)\to \Proc,\\
	\repPA[]:\Proc\to \Proc,\\
	\newPA[]:[\Chan]\Proc\to \Proc,
      \end{array}
      \begin{array}[t]{l}
	\tauAA:\One\to\Act,\\
	\inAA[]{}:(\Chan\times \Chan)\to\Act,\\
	\outAA[]{}:(\Chan\times \Chan)\to\Act,\\
	\boutAA[]{}:(\Chan\times \Chan)\to\Act\quad\}.
      \end{array}
    \end{array}
  \end{displaymath}

  Recalling terminology from \cite{SW01,CMRG12}, $\nullPA$ stands for inaction,
  $\tauPA[p]$ for the internal action after which process $p$ follows, $\inPA[a]{[b]p}$
  for the input at channel $a$ where the input name is bound to $b$ in the process $p$
  that follows, $\outPA[a]{b}{p}$ for the output of name $b$ through channel $a$ after
  which process $p$ follows, $\parPA[p]{q}$ for parallel composition, $\sumPA[p]{q}$ for
  nondeterministic choice, $\repPA[p]$ for parallel replication, and $\newPA[{[a]p}]$ for
  the restriction of channel $a$ in process $p$ ($a$ is private in $p$). Actions and
  processes belong to different sorts. We use $\tauAA$, $\outAA[a]{b}$, $\inAA[a]{b}$ and
  $\boutAA[a]{b}$ respectively for the internal action, the output action, the input
  action and the bound output action.

  The set of terms of the $\pi$-calculus corresponds to the subset of ground terms over
  $\Sigma$ of sort $\Proc$ and $\Act$ in which no moderated (sub-)terms occur. For
  instance, the process $(\nu b)(\overline{a}b.0)$ corresponds to the ground term
  $\newPA[{[b](\outPA[a]{b}{\nullPA})}]$, whose support is $\{a,b\}$. Both free and bound
  channel names (such as the $a$ and $b$ respectively in the example process) are
  represented by atoms. The set of ground terms also contains generalised processes and
  actions with moderated (sub-)terms $\susp{\pi}{p}$, which stand for a delayed
  permutation $\pi$ that ought to be applied to a term $p$, \eg\
  $\newPA[\susp{\pi}{([b](\outPA[a]{b}{\nullPA}))}]$.\qed
\end{example}

Raw terms allow variables to occur in the place of any ground subterm. The variables
represent \emph{unknowns}, and should be mistaken neither with free nor bound channel
names. For instance, the raw term $\newPA[{[b](\outPA[a]{b}{x})}]$ represents a process
$(\nu b)(\overline{a}b.P)$ where the $x$ is akin to the meta-variable $P$, which stands
for some unknown process. The process $(\nu b)(\overline{a}b.P)$ unifies with
$(\nu b)(\overline{a}b.0)$ by replacing $P$ with $0$. In the nominal setting, the raw term
$\newPA[{[b](\outPA[a]{b}{x})}]$ unifies with ground term
$\newPA[{[b](\outPA[a]{b}{\nullPA})}]$, by means of a \emph{substitution} $\varphi$ such
that $\varphi(x)=\nullPA$. Formally, substitutions are defined below.

\begin{definition}[Substitution]
  \label{def:substitution}
  A \emph{substitution} $\varphi:\Var\to_\fs\rTerm{\Sigma}$ is a sort-preserving, finitely
  supported function from variables to raw terms. The \emph{domain $\dom(\varphi)$ of a
    substitution $\varphi$} is the set ${\{x\mid \varphi(x)\not=x\}}$. A substitution
  $\varphi$ is \emph{ground} iff $\varphi(x)\in\gTerm{\Sigma}$ for every variable
  $x\in\dom(\varphi)$.
\end{definition}

The set of substitutions is a nominal set. The \emph{extension to raw terms
  $\overline{\varphi}$ of substitution $\varphi$} is the unique homomorphism induced by
$\varphi$ from the free algebra $\rTerm{\Sigma}$ to itself, which coincides with the
function given by:
\begin{displaymath}
\begin{array}{ll}
\begin{array}[t]{rcl}
\overline{\varphi}(x)&=&\varphi(x)\\
\overline{\varphi}(a)&=&a\\
\overline{\varphi}(\susp{\pi}{t})&=&\susp{\pi}{\overline{\varphi}(t)}
\end{array}&
\begin{array}[t]{rcl}
\overline{\varphi}([a]t)&=&[a](\overline{\varphi}(t))\\
\overline{\varphi}(t_1,\ldots,t_k)&=&(\overline{\varphi}(t_1),\ldots,
\overline{\varphi}(t_k))\\
\overline{\varphi}(f(t))&=&f(\overline{\varphi}(t)).
\end{array}
\end{array}
\end{displaymath}
Given substitutions $\varphi$ and $\gamma$ we write $\varphi\circ\gamma$ for their
composition, which is defined as follows: For every variable $x$,
$(\varphi\circ\gamma)(x)=\overline{\varphi}(t)$ where $\gamma(x)=t$. It is straightforward
to check that
$(\overline{\varphi\circ\gamma})(t) = \overline{\varphi}(\overline{\gamma}(t))$. We note
that our definition of substitution is different form those in both \cite{UPG04,CP07},
where the authors consider a function that performs the delayed permutations of the
moderated terms \emph{on-the-fly}.

\begin{lemma}[Extension to raw terms is equivariant]
  \label{lem:extension-equivariant}
  Let $\varphi$ be a substitution and $\pi$ a permutation. Then,
  $\peract{\pi}{\overline{\varphi}}= \overline{\peract{\pi}{\varphi}}$.
\end{lemma}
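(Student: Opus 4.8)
The plan is to prove the pointwise equality $(\peract{\pi}{\overline{\varphi}})(t) = (\overline{\peract{\pi}{\varphi}})(t)$ for every raw term $t\in\rTerm{\Sigma}$, by structural induction on $t$. Since both $\peract{\pi}{\overline{\varphi}}$ and $\overline{\peract{\pi}{\varphi}}$ are functions $\rTerm{\Sigma}\to\rTerm{\Sigma}$, establishing agreement on all inputs suffices. Before the induction I would unfold the two sides using the definitions already in the excerpt: the permutation action on functions gives $(\peract{\pi}{\overline{\varphi}})(t)=\peract{\pi}{\overline{\varphi}(\peract{\pi^{-1}}{t})}$, while $\overline{\peract{\pi}{\varphi}}$ is the homomorphic extension of the substitution $\peract{\pi}{\varphi}$, which in turn (again by the action on functions) satisfies $(\peract{\pi}{\varphi})(x)=\peract{\pi}{\varphi(\peract{\pi^{-1}}{x})}=\peract{\pi}{\varphi(x)}$, using $\peract{\pi^{-1}}{x}=x$ since the action on variables is trivial.

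The base cases are the two leaves of the grammar. For a variable $x$: the left side is $\peract{\pi}{\overline{\varphi}(\peract{\pi^{-1}}{x})}=\peract{\pi}{\overline{\varphi}(x)}=\peract{\pi}{\varphi(x)}$, and the right side is $(\overline{\peract{\pi}{\varphi}})(x)=(\peract{\pi}{\varphi})(x)=\peract{\pi}{\varphi(x)}$, so they agree. For an atom $a$: the left side is $\peract{\pi}{\overline{\varphi}(\peract{\pi^{-1}}{a})}=\peract{\pi}{\overline{\varphi}(\pi^{-1}a)}=\peract{\pi}{(\pi^{-1}a)}=a$ since $\overline{\varphi}$ fixes atoms, and the right side is $(\overline{\peract{\pi}{\varphi}})(a)=a$ directly; agreement holds.

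For the inductive step I would treat each of the four compound constructors---moderated term $\susp{\pi_1}{t}$, abstraction $[a]t$, product $(t_1,\ldots,t_k)$, and function application $f(t)$---uniformly, exploiting that $\peract{\pi^{-1}}{(-)}$ distributes over each constructor (as recorded in the permutation-action table for raw terms) and so does the homomorphic extension $\overline{(-)}$. Concretely, for $f(t)$: the left side is $\peract{\pi}{\overline{\varphi}(\peract{\pi^{-1}}{f(t)})}=\peract{\pi}{\overline{\varphi}(f(\peract{\pi^{-1}}{t}))}=\peract{\pi}{f(\overline{\varphi}(\peract{\pi^{-1}}{t}))}=f(\peract{\pi}{\overline{\varphi}(\peract{\pi^{-1}}{t})})=f((\peract{\pi}{\overline{\varphi}})(t))$, which by the induction hypothesis equals $f((\overline{\peract{\pi}{\varphi}})(t))=(\overline{\peract{\pi}{\varphi}})(f(t))$. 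The abstraction case additionally pushes the permutation onto the bound atom via $\peract{\pi}{[a]t}=[\pi a]\peract{\pi}{t}$ and uses $\overline{\psi}([a']t')=[a']\overline{\psi}(t')$; the moderated case uses $\peract{\pi}{(\susp{\pi_1}{t})}=\susp{(\peract{\pi}{\pi_1})}{\peract{\pi}{t}}$ together with $\overline{\psi}(\susp{\pi_1}{t'})=\susp{\pi_1}{\overline{\psi}(t')}$, so the delayed permutation $\pi_1$ is carried along unchanged on the inner term after conjugation by $\pi$; the product case is componentwise. I do not expect a genuine obstacle here; the only point requiring mild care is bookkeeping in the abstraction and moderated-term cases to confirm that the outer permutation acts on the binder/delayed permutation consistently on both sides, which it does precisely because $\overline{(-)}$ is defined not to disturb atoms, abstracted atoms, or delayed permutations. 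One should also note that both sides are sort-preserving and finitely supported, so the resulting map is again a legitimate substitution, but this is immediate from the facts already established in the excerpt.
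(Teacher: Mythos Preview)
Your proposal is correct. The paper states this lemma without proof, so there is nothing to compare against; the structural induction on raw terms you outline is exactly the expected argument, and each case goes through as you describe (in particular, the moderated-term case relies on $\peract{\pi}{(\peract{\pi^{-1}}{\pi_1})}=\pi_1$, which follows from the group-action laws).
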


It is straightforward to check that the support of $\overline{\varphi}$ coincides with the
support of $\varphi$. By the above lemma, the set of extended substitutions is also a
nominal set, since it is closed with respect to the $\perm{\Atom}$-action. Hereafter we
sometimes write $\varphi(t)$, where $t$ is a raw term, instead of
$\overline{\varphi}(t)$. We may also write $\varphi^\pi$ instead of
$\peract{\pi}{\overline{\varphi}}$ or $\overline{\peract{\pi}{\varphi}}$ for
short\label{pg:varphi-pi}.

The following result highlights the relation between substitution and the permutation
action.

\begin{lemma}[Substitution and permutation action]
  \label{lem:subst-perm}
  Let $\varphi$ be a substitution, $\pi$ a permutation and $t$ a raw term.  Then,
  $\peract{\pi}{\varphi(t)} = \varphi^\pi(\peract{\pi}{t})$.
\end{lemma}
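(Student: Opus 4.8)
The plan is to prove $\peract{\pi}{\varphi(t)} = \varphi^\pi(\peract{\pi}{t})$ by structural induction on the raw term $t$, using the explicit recursive definitions of the extension $\overline{\varphi}$, of the permutation action on raw terms, and of the permutation action on substitutions (conjugation). Throughout I would write $\varphi^\pi$ for $\peract{\pi}{\overline{\varphi}}$, and recall from Lemma~\ref{lem:extension-equivariant} that this equals $\overline{\peract{\pi}{\varphi}}$, so that $\varphi^\pi$ behaves as the extension of an honest substitution and in particular is a homomorphism satisfying the clauses displayed just before the lemma (with $\peract{\pi}{\varphi}$ in place of $\varphi$).

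First I would handle the base cases. For a variable $x$: the left-hand side is $\peract{\pi}{\overline{\varphi}(x)} = \peract{\pi}{\varphi(x)}$, while the right-hand side is $\varphi^\pi(\peract{\pi}{x}) = \varphi^\pi(x) = (\peract{\pi}{\overline{\varphi}})(x)$, and unfolding the definition of the permutation action on functions gives $(\peract{\pi}{\overline{\varphi}})(x) = \peract{\pi}{\overline{\varphi}(\peract{\pi^{-1}}{x})} = \peract{\pi}{\overline{\varphi}(x)} = \peract{\pi}{\varphi(x)}$, so the two sides agree. (Here the key point is simply that $\peract{\pi^{-1}}{x} = x$ since variables are permutation-invariant.) For an atom $a$: the left-hand side is $\peract{\pi}{\overline{\varphi}(a)} = \peract{\pi}{a} = \pi\,a$, and the right-hand side is $\varphi^\pi(\peract{\pi}{a}) = \varphi^\pi(\pi\,a) = \pi\,a$, since a substitution (hence its extension) is the identity on atoms; so again both sides coincide.

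Next the inductive cases, each of which is a routine push of the permutation through one term former using the induction hypothesis on the immediate subterm(s). For a product $(t_1,\ldots,t_k)$: $\peract{\pi}{\overline{\varphi}(t_1,\ldots,t_k)} = \peract{\pi}{(\overline{\varphi}(t_1),\ldots,\overline{\varphi}(t_k))} = (\peract{\pi}{\overline{\varphi}(t_1)},\ldots,\peract{\pi}{\overline{\varphi}(t_k)})$, and by the IH this is $(\varphi^\pi(\peract{\pi}{t_1}),\ldots,\varphi^\pi(\peract{\pi}{t_k})) = \varphi^\pi(\peract{\pi}{t_1},\ldots,\peract{\pi}{t_k}) = \varphi^\pi(\peract{\pi}{(t_1,\ldots,t_k)})$. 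The case $f(t)$ is identical in shape, using $\overline{\varphi}(f(t)) = f(\overline{\varphi}(t))$ and the corresponding clauses of the action. For an abstraction $[a]t$: $\peract{\pi}{\overline{\varphi}([a]t)} = \peract{\pi}{[a](\overline{\varphi}(t))} = [\pi\,a](\peract{\pi}{\overline{\varphi}(t)})$, which by the IH equals $[\pi\,a](\varphi^\pi(\peract{\pi}{t})) = \varphi^\pi([\pi\,a](\peract{\pi}{t})) = \varphi^\pi(\peract{\pi}{[a]t})$, where the third equality uses that $\varphi^\pi$, being a substitution extension, commutes with atom abstraction. The moderated-term case $\susp{\pi_1}{t}$ is the only one with any subtlety: the left-hand side is $\peract{\pi}{\overline{\varphi}(\susp{\pi_1}{t})} = \peract{\pi}{\susp{\pi_1}{\overline{\varphi}(t)}} = \susp{(\peract{\pi}{\pi_1})}{(\peract{\pi}{\overline{\varphi}(t)})}$, and by the IH this is $\susp{(\peract{\pi}{\pi_1})}{(\varphi^\pi(\peract{\pi}{t}))}$; the right-hand side is $\varphi^\pi(\peract{\pi}{\susp{\pi_1}{t}}) = \varphi^\pi(\susp{(\peract{\pi}{\pi_1})}{(\peract{\pi}{t})}) = \susp{(\peract{\pi}{\pi_1})}{(\varphi^\pi(\peract{\pi}{t}))}$, using the clause $\overline{\gamma}(\susp{\rho}{s}) = \susp{\rho}{\overline{\gamma}(s)}$ of the extension applied to $\gamma = \peract{\pi}{\varphi}$; so the two sides match.

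I do not expect a genuine obstacle here: the proof is a direct structural induction whose only mildly delicate point is bookkeeping the action on the delayed permutation $\pi_1$ in the moderated-term clause and being careful that $\varphi^\pi$ really is the extension of $\peract{\pi}{\varphi}$ — which is exactly the content of Lemma~\ref{lem:extension-equivariant}, invoked precisely so that $\varphi^\pi$ enjoys all the homomorphism clauses of a substitution extension. The main thing to get right is therefore to state the induction with $\varphi$ (equivalently $\pi$) fixed and to appeal to Lemma~\ref{lem:extension-equivariant} up front, after which every case closes by unfolding definitions and applying the induction hypothesis.
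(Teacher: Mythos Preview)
Your proof is correct; the structural induction goes through exactly as you describe, and each case closes. The paper itself does not spell out a proof of this lemma, so there is no detailed argument to compare against line by line.

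That said, you are working harder than necessary. The result is a one-line consequence of the definition of the permutation action on functions, and requires neither structural induction nor Lemma~\ref{lem:extension-equivariant}. Recall that for any function $f$ the conjugation action is $(\peract{\pi}{f})(s)=\peract{\pi}{f(\peract{\pi^{-1}}{s})}$. Taking $f=\overline{\varphi}$ and $s=\peract{\pi}{t}$ gives
\[
\varphi^\pi(\peract{\pi}{t})
=(\peract{\pi}{\overline{\varphi}})(\peract{\pi}{t})
=\peract{\pi}{\overline{\varphi}(\peract{\pi^{-1}}{(\peract{\pi}{t})})}
=\peract{\pi}{\overline{\varphi}(t)}
=\peract{\pi}{\varphi(t)},
\]
which is the statement. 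This is presumably why the paper records the lemma without proof. Your inductive argument is still valuable in that it re-derives the same identity from the homomorphism clauses (and so would survive if one took $\varphi^\pi$ to mean $\overline{\peract{\pi}{\varphi}}$ as primitive, rather than $\peract{\pi}{\overline{\varphi}}$), but strictly speaking the induction and the appeal to Lemma~\ref{lem:extension-equivariant} are detours.
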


Our goal is to give meaning to ground terms in nominal sets. To this end, we need a
suitable class of algebraic structures that can be used to give an \emph{interpretation}
of those ground terms.

\begin{definition}[$\Sigma$-structure]
  \label{def:sigma-structure}
  Let $\Sigma=(\Delta,A,F)$ be a signature. A \emph{$\Sigma$-structure $M$} consists of a
  nominal set $M[\![\sigma]\!]$ for each sort $\sigma$ defined as follows
  \begin{displaymath}
    \begin{array}{rcl}
      M[\![\alpha]\!]&=&\Atom_\alpha\\
      M[\![[\alpha]\sigma]\!]&=&[\Atom_\alpha](M[\![\sigma]\!])\\
      M[\![\sigma_1\times\ldots\times\sigma_k]\!]&=&
         M[\![\sigma_1]\!]\times\ldots\times M[\![\sigma_k]\!],
    \end{array}
  \end{displaymath}
  where the $M[\![\delta_i]\!]$ with $\delta_i\in\Delta$ are given, as well as an
  equivariant function $M[\![f_{ij}]\!] : M[\![\sigma_{ij}]\!]\to M[\![\delta_i]\!]$ for
  each symbol $(f_{ij})_{\sigma_{ij}\to\delta_i}\in F$.
\end{definition}

The notion of $\Sigma$-structure adapts that of $\Sigma$-structure in \cite{CP07} to our
sorting convention with atom and abstraction sorts. The $\Sigma$-structures characterise a
range of interpretations of ground terms into elements of nominal sets, such that any sort
$\sigma$ gives rise to the expected nominal set, \ie\ atom sorts give rise to sets of
atoms, abstraction sorts give rise to sets of atom abstractions, and product sorts give
rise to finite products of nominal sets.

Next we define the \emph{interpretation of a ground term in a $\Sigma$-structure}, which
resembles the \emph{value of a term} in \cite{CP07}.

\begin{definition}[Interpretation of ground terms in a $\Sigma$-structure]
  \label{def:interpretation}
  Let $\Sigma$ be a signature and $M$ be a $\Sigma$-structure. The \emph{interpretation
    $M[\![p]\!]$ of a ground term $p$ in $M$} is given by:
  \begin{displaymath}
    \begin{array}{ll}
      \begin{array}[t]{rcl}
	M[\![a]\!]&=&a\\
	M[\![\susp{\pi}{p}]\!]&=&\peract{\pi}{M[\![p]\!]}\\
	M[\![[a]p]\!]&=& \langle a\rangle(M[\![p]\!])
      \end{array}&
      \begin{array}[t]{rcl}
        M[\![(p_1,\ldots,p_k)]\!]&=&
	(M[\![p_1]\!],\ldots,M[\![p_k]\!])\\
	M[\![f(p)]\!]&=&M[\![f]\!](M[\![p]\!]).
      \end{array}
    \end{array}
  \end{displaymath}
\end{definition}

The next lemma states that interpretation in a $\Sigma$-structure is equivariant and
highlights the relation between interpretation and moderated terms.

\begin{lemma}[Interpretation and moderated terms]
  \label{lem:interp-mod}
  Let $M$ be a $\Sigma$-structure. Interpretation in $M$ is equivariant, that is,
  $\peract{\pi}{M[\![p]\!]} = M[\![\peract{\pi}{p}]\!]$ for every ground term $p$ and
  permutation $\pi$. Moreover, $M[\![\susp{\pi}{p}]\!] = M[\![\peract{\pi}{p}]\!]$.
\end{lemma}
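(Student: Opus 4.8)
The plan is to prove the two claims of Lemma~\ref{lem:interp-mod} in sequence, the first by structural induction on the ground term $p$, and the second as an immediate corollary of the first together with the defining clause for moderated terms in Definition~\ref{def:interpretation}.

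First I would establish equivariance, i.e.\ $\peract{\pi}{M[\![p]\!]} = M[\![\peract{\pi}{p}]\!]$ for every ground term $p$ and permutation $\pi$, by induction on the structure of $p$. The base case $p = a$ is the identity $\peract{\pi}{a} = \pi\,a$, which holds since $M[\![a]\!] = a$ and $M[\![\peract{\pi}{a}]\!] = M[\![\pi\,a]\!] = \pi\,a$ by the permutation action on raw terms and the interpretation clause for atoms. For the inductive cases I would unfold both sides using Definition~\ref{def:interpretation} and the permutation action on raw terms given in Section~\ref{sec:terms}, then apply the induction hypothesis:
\begin{itemize}
\item For $p = \susp{\pi_1}{q}$: the left-hand side is $\peract{\pi}{(\peract{\pi_1}{M[\![q]\!]})} = \peract{(\pi\circ\pi_1)}{M[\![q]\!]}$ by the composition law, while the right-hand side is $M[\![\susp{(\peract{\pi}{\pi_1})}{(\peract{\pi}{q})}]\!] = \peract{(\peract{\pi}{\pi_1})}{M[\![\peract{\pi}{q}]\!]}$, which by the IH equals $\peract{(\peract{\pi}{\pi_1})}{(\peract{\pi}{M[\![q]\!]})} = \peract{((\peract{\pi}{\pi_1})\circ\pi)}{M[\![q]\!]}$; these agree since $(\peract{\pi}{\pi_1})\circ\pi = (\pi\circ\pi_1\circ\pi^{-1})\circ\pi = \pi\circ\pi_1$ by the definition of the permutation action on permutations (conjugation).
\item For $p = [a]q$: unfold to $\peract{\pi}{\langle a\rangle(M[\![q]\!])}$ versus $\langle \pi\,a\rangle(M[\![\peract{\pi}{q}]\!])$; use the permutation action on atom abstractions from Definition~\ref{def:atom-abstraction}, which gives $\peract{\pi}{\langle a\rangle s} = \langle \pi\,a\rangle(\peract{\pi}{s})$, and then apply the IH to $q$.
\item For $p = (p_1,\ldots,p_k)$: the permutation action on products is componentwise, so both sides reduce to $(\peract{\pi}{M[\![p_1]\!]},\ldots,\peract{\pi}{M[\![p_k]\!]})$ after applying the IH to each $p_i$.
\item For $p = f(q)$: the left-hand side is $\peract{\pi}{(M[\![f]\!](M[\![q]\!]))}$, which equals $M[\![f]\!](\peract{\pi}{M[\![q]\!]})$ because $M[\![f]\!]$ is equivariant by Definition~\ref{def:sigma-structure}; by the IH this is $M[\![f]\!](M[\![\peract{\pi}{q}]\!]) = M[\![f(\peract{\pi}{q})]\!] = M[\![\peract{\pi}{f(q)}]\!]$.
\end{itemize}
There is no variable case since $p$ is ground.

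With equivariance in hand, the second claim is immediate: by the moderated-term clause of Definition~\ref{def:interpretation}, $M[\![\susp{\pi}{p}]\!] = \peract{\pi}{M[\![p]\!]}$, and by the equivariance just proved this equals $M[\![\peract{\pi}{p}]\!]$. I do not anticipate a serious obstacle here; the only point requiring care is the moderated-term case of the induction, where one must correctly handle the interaction between the outer permutation $\pi$ and the delayed permutation $\pi_1$ using conjugation, so that the composition laws line up. Everything else is a routine unfolding of definitions combined with the induction hypothesis and the equivariance of the function symbols $M[\![f]\!]$ and of the atom-abstraction operation.
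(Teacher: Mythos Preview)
Your proposal is correct and follows exactly the natural route: structural induction on the ground term $p$ for equivariance, using the equivariance of $M[\![f]\!]$ in the datum case and the conjugation identity $(\peract{\pi}{\pi_1})\circ\pi=\pi\circ\pi_1$ in the moderated-term case, and then deriving the second claim directly from the moderated-term clause of Definition~\ref{def:interpretation}. The paper itself states the lemma without proof (treating it as a routine verification), so there is nothing further to compare against; your argument is precisely the expected one.
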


Finally, we introduce the $\Sigma$-structure $\mathit{NT}$, which formalises the set of
\emph{nominal terms}.

\begin{definition}[$\Sigma$-structure for nominal terms]
  \label{def:NT}
  Let $\Sigma$ be a signature. The \emph{$\Sigma$-structure $\mathit{NT}$ for nominal
    terms} is given by the least tuple $(\NT{\delta_1},\ldots,\NT{\delta_n})$ satisfying
  \begin{displaymath}
    \NT{\delta_i} = \NT{\sigma_{i1}} + \ldots + \NT{\sigma_{im_i}}\quad
    \text{for each base sort $\delta_i\in\Delta$, and}
  \end{displaymath}
  $\NT{f_{ij}}=\inj_j:\NT{\sigma_{ij}}\to\NT{\delta_i}$, for each function symbol
  $f_{ij}\in F$.

  In the conditions above, the `less than or equal to' relation for tuples is pointwise
  set inclusion. The $\NT{f_{ij}}$ is the $j$th injection of the $i$th component in
  $(\NT{\delta_1},\ldots,\NT{\delta_n})$.
\end{definition}

Nominal terms represent alpha-equivalence classes of raw terms by using the atom
abstractions of Definition~\ref{def:atom-abstraction}.
\begin{definition}[Nominal terms]
  Let $\Sigma$ be a signature. The set $\nTerm{\Sigma}_\sigma$ of \emph{nominal terms over
    $\Sigma$ of sort $\sigma$} is the domain of interpretation of the ground terms of sort
  $\sigma$ in the $\Sigma$-structure $\mathit{NT}$, that is,
  $\nTerm{\Sigma}_\sigma=\NT{\sigma}$.
\end{definition}

We sometimes write $p$, $\ell$ instead of $\NT{p}$, $\NT{\ell}$ when it is clear from the
context that we are referring to the interpretation into nominal terms of ground terms $p$
and $\ell$.

It can be checked that the nominal sets $\nTerm{\Sigma}_\sigma$ coincide (up to
isomorphism) with the nominal algebraic datatypes of Definition~8.9 in \cite{Pit13}, and
therefore by Theorem~8.15 in \cite{Pit13} the nominal terms represent alpha-equivalence
classes of raw terms.

\section{Specifications of NRTSs}
\label{sec-specification-nrts}
The NRTSs of Definition~\ref{def:nrts} are meant to be a model of computation for
calculi with name-binding operators and state/residual presentation. In this
section we present syntactic specifications for NRTSs. We start by defining nominal
residual signatures.

\begin{definition}[Nominal residual signature]
  \label{def:nominal-signature}
  A \emph{nominal residual signature} (a residual signature for short) is a quintuple
  $\Sigma=(\Delta,\ASet,\sigma,\rho,F)$ such that $(\Delta,\ASet,F)$ is a nominal
  signature and $\sigma$ and $\rho$ are distinguished nominal sorts over $\Delta$ and
  $\ASet$, which we call \emph{state sort} and \emph{residual sort} respectively. We say
  that $\nTerm{\Sigma}_\sigma$ is the set of \emph{states} and $\nTerm{\Sigma}_\rho$ is
  the set of \emph{residuals}.

  Let $\mathcal{T}=(S,R,\rel{})$ be an NRTS and $\Sigma=(\Delta,A,\sigma,\rho,F)$ be a
  residual signature. We say that $\mathcal{T}$ is an NRTS \emph{over signature} $\Sigma$
  iff the sets of states $S$ and residuals $R$ coincide with the sets of nominal terms of
  state sort $\nTerm{\Sigma}_\sigma$ and residual sort $\nTerm{\Sigma}_\rho$ respectively.
\end{definition}

Our next goal is to introduce syntactic specifications of NRTSs, which we call nominal
residual transition system specifications. To this end, we will make use of residual
formulas and freshness assertions over raw terms, which are defined below.

\begin{definition}[Residual formula and freshness assertion]
  A \emph{residual formula} (a \emph{formula} for short) over a residual signature
  $\Sigma$ is a pair $(s,r)$, where $s\in\rTerm{\Sigma}_\sigma$ and
  $r\in\rTerm{\Sigma}_\rho$. We use the more suggestive $s\rel{}r$ in lieu of $(s,r)$. A
  formula $s\rel{}r$ is \emph{ground} iff $s$ and $r$ are ground terms.

  A \emph{freshness assertion} (an \emph{assertion} for short) over a signature $\Sigma$
  is a pair $(a,t)$ where $a\in\Atom$ and $t\in\rTerm{\Sigma}$. We will write $a\fra t$ in
  lieu of $(a,t)$.  An assertion is \emph{ground} iff $t$ is a ground term.
\end{definition}
\begin{remark}
  Formulas and assertions are raw syntactic objects, similar to raw terms, which will
  occur in the rules of the nominal residual transition system specifications to be
  defined, and whose purpose is to represent respectively transitions and freshness
  relations involving nominal terms. A formula $s\rel{}r$ (resp. an assertion $a\fra t$)
  unifies with a ground formula $\varphi(s)\rel{}\varphi(r)$ (resp. a ground assertion
  $a\fra \varphi(t)$), which in turn represents a transition
  $\NT{\varphi(s)}\rel{}\NT{\varphi(r)}$ (resp. a freshness relation
  $a\# \NT{\varphi(t)}$).  For the assertions, notice how the symbols $\fra$, $\#$ and
  $\NT{\ }$ interact. The ground assertion $a\fra [a]a$ represents the freshness relation
  $a\#\NT{[a]a}$, which is true. On the other hand, the freshness relation $a\#[a]a$ is
  false because $a\in\supp([a]a)$.
\end{remark}

Permutation action and substitution extend to formulas and assertions in the expected
way. Formulas and assertions are elements of nominal sets. Their support is the union of the
supports of the raw terms in them, hence we write ${\supp(t\rel{}t')}$ and
${\supp(a\fra t)}$.  We will also write $b\#(t\rel{} t')$ and $b\#(a\fra t)$ for freshness
relations involving formulas and assertions respectively.

\begin{definition}[Nominal residual transition system specification]
  \label{def:NRTSS}
  Let $\Sigma$ be a residual signature $(\Delta,\ASet,\sigma,\rho,F)$. A \emph{transition
    rule} \emph{over} $\Sigma$ (a \emph{rule}, for short) is of the form
  \begin{mathpar}
    \inferrule*[right={,}]
    {\{u_i\rel{}u'_i\mid i\in I\}\quad \{a_j\fra v_j\mid j\in J\}}
    {t\rel{}t'}
  \end{mathpar}
  abbreviated as $H,\nabla/t\rel{}t'$, where $H={\{u_i\rel{}u'_i\mid i\in I\}}$ is a
  finitely supported set of formulas over $\Sigma$ (we call $H$ the set of
  \emph{premisses}) and where $\nabla=\{a_j\fra v_j\mid j\in J\}$ is a finite set of
  assertions over $\Sigma$ (we call $\nabla$ the \emph{freshness environment}). We say
  formula $t\rel{}t'$ over $\Sigma$ is the \emph{conclusion}, where $t$ is the
  \emph{source} and $t'$ is the \emph{target}. A rule is an \emph{axiom} iff it has an
  empty set of premisses. Note that axioms might have a non-empty freshness environment.

  A \emph{nominal residual transition system specification over} $\Sigma$ (abbreviated to
  NRTSS) is a set of transition rules over $\Sigma$.
\end{definition}

Permutation action and substitution extend to rules in the expected way; they are applied
to each of the formulas and freshness assertions in the rule.

Notice that the rules of an NRTSS are elements of a nominal set. The support of a rule
$H,\nabla/t\rel{}t'$ is the union of the support of $H$, the support of $\nabla$ and the
support of $t\rel{}t'$. %
In the sequel we write $\supp(\textsc{Ru})$ for the support of rule \textsc{Ru}, and
$a\#\textsc{Ru}$ for a freshness relation involving atom $a$ and rule \textsc{Ru}. Observe
that the set $H$ of premisses of a rule may be infinite, but its support must be
finite. However, the freshness environment $\nabla$ must be finite in order to make the
simplification rules of Definition~\ref{def:simplification} to follow terminating. These
simplification rules will be used in Section~\ref{sec-rule-format-nrts} to define the rule
format in Definition~\ref{def:alpha-conv-format}.

Let $\R$ be an NRTSS. We say that the formula $s\rel{}r$ \emph{unifies} with rule \Ru\ in
$\R$ iff \Ru\ has conclusion $t\rel{}t'$ and $s\rel{}r$ is a substitution instance of
$t\rel{}t'$. If $s$ and $r$ are ground terms, we also say that transition
$\NT{s}\rel{}\NT{r}$ unifies with \Ru.

\begin{definition}[Proof tree]
  \label{def:proof-tree}
  Let $\Sigma$ be a residual signature and $\R$ be an NRTSS over $\Sigma$. A proof tree in
  $\R$ of a transition $\NT{s}\rel{}\NT{r}$ is an upwardly branching rooted tree without
  paths of infinite length whose nodes are labelled by transitions such that
  \begin{enumerate}[(i)]
  \item the root is labelled by $\NT{s}\rel{}\NT{r}$, and
  \item if $K=\{\NT{q_i}\rel{}\NT{q_i'}\mid i\in I\}$ is the set of labels of the nodes
    directly above a node with label ${\NT{p}\rel{}\NT{p'}}$, then there exist a rule
    \begin{mathpar}
      \inferrule*
      {\{u_i\rel{}u'_i\mid i\in I\} \qquad
        \{a_j\fra v_j\mid j\in J\}}
      {t\rel{}t'}
    \end{mathpar}
    in $\R$ and a ground substitution $\varphi$ such that $\varphi(t\rel{}t')=p\rel{}p'$
    and, for each $i\in I$ and for each $j\in J$, $\varphi(u_i\rel{}u_i')=q_i\rel{}q_i'$
    and $a_j\#\NT{\varphi(v_j)}$ hold.
  \end{enumerate}
  We say that $\NT{s}\rel{}\NT{r}$ is \emph{provable} in $\R$ iff it has a proof
  tree in $\R$. The transition relation specified by $\R$ consists of all the
  transitions that are provable in $\R$.
\end{definition}

The nodes of a proof tree are labelled by transitions, which contain nominal terms (\ie\
syntactic objects that use the atom abstractions of
Definition~\ref{def:atom-abstraction}). The use of nominal terms captures the convention
in typical nominal calculi of considering terms `up to alpha-equivalence'.

The fact that the nodes of a proof tree are labelled by nominal terms is the main
difference between our approach and previous work in nominal structural operational
semantics \cite{ACGIMR}, nominal rewriting \cite{UPG04,FG07} and nominal algebra
\cite{GM09}. In all these works, the `up-to-alpha-equivalence' transitions are explicitly
instrumented within the model of computation by adding to the specification system
inference rules that perform alpha-conversion of raw terms.

\section{Rule formats for NRTSSs}
\label{sec-rule-format-nrts}
This section defines two rule formats for NRTSSs that ensure that:
\begin{inparaenum}[(i)]
\item an NRTSS induces an equivariant transition relation, and thus an NRTS of
  Definition~\ref{def:nrts};
\item an NRTSS induces a transition relation which, together with an equivariant function
  $\bn$, corresponds to an NTS of Definition~\ref{def:nts} \cite{PBEGW15}. For the latter,
  we need to ensure that the induced transition relation is equivariant and satisfies
  \emph{alpha-conversion of residuals} (recall, if $p\rel{}(\ell,p')$ is provable in $\R$
  and $a$ is in the set of binding names of $\ell$, then for every atom $b$ that is fresh
  in $(\ell,p')$ the transition $p\rel{}\peract{\tr{a}{b}}{(\ell,p')}$ is also provable).
\end{inparaenum}

As a first step, we introduce a rule format ensuring equivariance of the induced
transition relation.

\begin{definition}[Equivariant format]
  \label{def:equivariant-format}
  Let $\mathcal{R}$ be an NRTSS. $\mathcal{R}$ is in \emph{equivariant format} iff the
  rule $\peract{\tr{a}{b}}{\textsc{Ru}}$ is in $\mathcal{R}$, for every rule $\textsc{Ru}$
  in $\R$ and for each $a,b\in\Atom$.
\end{definition}

\begin{lemma}
  \label{lem:equivariant-format}
  Let $\mathcal{R}$ be an NRTSS in equivariant format. For every rule $\textsc{Ru}$ in
  $\mathcal{R}$ and for every permutation $\pi$, the rule $\peract{\pi}{\textsc{Ru}}$ is
  in $\mathcal{R}$.
\end{lemma}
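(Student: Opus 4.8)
The plan is to prove this by induction on the number of transpositions needed to express the permutation $\pi$. Recall from the preliminaries that every finite permutation $\pi \in \perm{\Atom}$ can be written as a composition of finitely many transpositions, $\pi = \tr{a_1}{b_1} \circ \ldots \circ \tr{a_n}{b_n}$ with $n \geq 0$. I will induct on this $n$.

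For the base case $n = 0$, we have $\pi = \iota$, and by the identity law of the permutation action, $\peract{\iota}{\textsc{Ru}} = \textsc{Ru}$, which is in $\R$ by hypothesis. For the inductive step, suppose the claim holds for all permutations expressible as a composition of $n$ transpositions, and let $\pi = \tr{a}{b} \circ \pi'$ where $\pi'$ is a composition of $n$ transpositions. By the induction hypothesis, $\peract{\pi'}{\textsc{Ru}}$ is in $\R$. Since $\peract{\pi'}{\textsc{Ru}}$ is itself a rule in $\R$, the definition of equivariant format (Definition~\ref{def:equivariant-format}) applied to this rule with the atoms $a, b$ yields that $\peract{\tr{a}{b}}{(\peract{\pi'}{\textsc{Ru}})}$ is in $\R$. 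By the composition law of the permutation action, $\peract{\tr{a}{b}}{(\peract{\pi'}{\textsc{Ru}})} = \peract{(\tr{a}{b} \circ \pi')}{\textsc{Ru}} = \peract{\pi}{\textsc{Ru}}$, completing the induction.

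One point that requires a brief justification is that the permutation action of the excerpt is defined on rules: the excerpt states that permutation action extends to rules by being applied componentwise to the formulas and freshness assertions, and that rules are elements of a nominal set, so the identity and composition laws hold at the level of rules. There is also a subtlety regarding sorted permutations, since we work with $\permso{\Atom}$ rather than $\perm{\Atom}$; however, every sorted permutation decomposes as a composition of transpositions $\tr{a}{b}$ where $a$ and $b$ have the same sort, and the equivariant format quantifies over all $a, b \in \Atom$ (in particular over same-sorted pairs), so the argument goes through unchanged. I do not expect any genuine obstacle here; the only care needed is to make sure the decomposition into transpositions is applied to sorted permutations using same-sorted transpositions, which is standard.
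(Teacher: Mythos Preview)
Your proof is correct and is exactly the natural argument: decompose $\pi$ into transpositions and iterate the closure condition of Definition~\ref{def:equivariant-format}, using the identity and composition laws of the permutation action on rules. The paper itself omits the proof of this lemma (it is stated without proof), so there is nothing to compare against beyond noting that your argument is the expected one; your remarks on the sorted case are also appropriate and cause no trouble.
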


\begin{theorem}[Rule format for NRTSs]
  \label{thm:rule-format-equivariance}
  Let $\R$ be an NRTSS. If $\R$ is in equivariant format then $\R$ induces an NRTS.
\end{theorem}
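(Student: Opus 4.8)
The plan is to show that the transition relation $\rel{}$ induced by $\R$ is equivariant, since by Definition~\ref{def:nrts} this is the only condition to verify: the sets of states $\nTerm{\Sigma}_\sigma$ and residuals $\nTerm{\Sigma}_\rho$ are already nominal sets by construction. So I must prove that if $\NT{s}\rel{}\NT{r}$ is provable in $\R$ then $\peract{\pi}{(\NT{s})}\rel{}\peract{\pi}{(\NT{r})}$ is provable in $\R$, for every permutation $\pi$. By Lemma~\ref{lem:interp-mod} (equivariance of interpretation), $\peract{\pi}{(\NT{s})} = \NT{\peract{\pi}{s}}$ and similarly for $r$, so it suffices to exhibit a proof tree for $\NT{\peract{\pi}{s}}\rel{}\NT{\peract{\pi}{r}}$.

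The key step is to transform a given proof tree for $\NT{s}\rel{}\NT{r}$ into a proof tree for the permuted transition by applying $\pi$ node-wise. First I would fix a proof tree $T$ witnessing $\NT{s}\rel{}\NT{r}$ and define $T'$ by replacing each node label $\NT{p}\rel{}\NT{p'}$ with $\NT{\peract{\pi}{p}}\rel{}\NT{\peract{\pi}{p'}}$; since permutation action on ground terms is sort-preserving, this is again a well-formed labelled tree with no infinite paths, and its root is labelled by the desired transition. It remains to check the local correctness condition (ii) of Definition~\ref{def:proof-tree} at each node of $T'$. At a node of $T$ we have a rule $\Ru = H,\nabla/t\rel{}t'$ in $\R$ and a ground substitution $\varphi$ with $\varphi(t\rel{}t')=p\rel{}p'$, $\varphi(u_i\rel{}u_i')=q_i\rel{}q_i'$ for each premiss, and $a_j\#\NT{\varphi(v_j)}$ for each assertion. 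For the corresponding node of $T'$ I would use the rule $\peract{\pi}{\Ru}$, which lies in $\R$ by Lemma~\ref{lem:equivariant-format} (this is exactly where the equivariant format hypothesis is used), together with the ground substitution $\varphi^\pi = \peract{\pi}{\overline{\varphi}}$ (still ground, and finitely supported). Using Lemma~\ref{lem:subst-perm} in the form $\varphi^\pi(\peract{\pi}{u}) = \peract{\pi}{\varphi(u)}$, one checks that $\varphi^\pi$ applied to the conclusion of $\peract{\pi}{\Ru}$ (namely $\peract{\pi}{t}\rel{}\peract{\pi}{t'}$) yields $\peract{\pi}{p}\rel{}\peract{\pi}{p'}$, and likewise that it sends each permuted premiss $\peract{\pi}{u_i}\rel{}\peract{\pi}{u'_i}$ to $\peract{\pi}{q_i}\rel{}\peract{\pi}{q'_i}$, matching the labels directly above this node in $T'$.

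The last piece is the freshness side-conditions: I must verify $(\peract{\pi}{a_j})\#\NT{\varphi^\pi(\peract{\pi}{v_j})}$ for each $j\in J$. By Lemma~\ref{lem:subst-perm} we have $\varphi^\pi(\peract{\pi}{v_j}) = \peract{\pi}{\varphi(v_j)}$, and by Lemma~\ref{lem:interp-mod} $\NT{\peract{\pi}{\varphi(v_j)}} = \peract{\pi}{(\NT{\varphi(v_j)})}$; since the freshness relation $\#$ is equivariant, $a_j\#\NT{\varphi(v_j)}$ implies $(\peract{\pi}{a_j})\#\peract{\pi}{(\NT{\varphi(v_j)})}$, as required. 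I expect the main obstacle to be purely bookkeeping rather than conceptual: one must be careful that the permutation action on rules really does act as $\peract{\pi}{\Ru} = (\peract{\pi}{H}),(\peract{\pi}{\nabla})/\peract{\pi}{t}\rel{}\peract{\pi}{t'}$ with the action distributing over the index sets in the expected way, and that $\varphi^\pi$ is indeed the unique ground substitution making the local diagram commute. Given Lemmas~\ref{lem:extension-equivariant}, \ref{lem:subst-perm} and \ref{lem:interp-mod} together with equivariance of $\#$ and $\supp$, all of these reduce to routine verification, and it suffices to carry it out for transpositions $\tr{a}{b}$ and then extend to arbitrary $\pi$ exactly as in Lemma~\ref{lem:equivariant-format}.
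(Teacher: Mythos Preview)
The paper states Theorem~\ref{thm:rule-format-equivariance} without proof, so there is no argument in the paper to compare against directly. Your proof is correct and is the expected one: transform a proof tree node-wise by $\pi$, justify each node of the new tree using the rule $\peract{\pi}{\Ru}\in\R$ (available by Lemma~\ref{lem:equivariant-format}, which is where the equivariant-format hypothesis enters) together with the conjugated substitution $\varphi^\pi$, and discharge the matching of conclusion, premisses and freshness side-conditions via Lemmas~\ref{lem:subst-perm} and~\ref{lem:interp-mod} and the equivariance of~$\#$.
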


Before introducing a rule format ensuring alpha-conversion of residuals, we adapt to our
freshness environments the simplification rules and the entailment relation of
Definition~10 and Lemma~15 in \cite{FG07}, which we will use in the definition of the rule
format.

\begin{definition}[Simplification of freshness environments]
  \label{def:simplification}
  Consider a signature $\Sigma$. The following rules, where $a$, $b$ are assumed to be
  distinct atoms and $\nabla$ is a freshness environment over $\Sigma$, define
  \emph{simplification of freshness environments}:
  \begin{displaymath}
    \begin{array}{l@{\hspace*{-.9cm}}l}\hspace*{-.3cm}
      \begin{array}[t]{rcl}
	\{a\fra b\}\cup \nabla&\Longrightarrow&\nabla\\
	\{a\fra \susp{\pi}{t}\}\cup \nabla&\Longrightarrow&
           \{\peract{\pi^{-1}}{a}\fra t\}\cup \nabla\\
	\{a\fra [b]p\}\cup\nabla&\Longrightarrow&\{a\fra p\}\cup\nabla
      \end{array} &
      \begin{array}[t]{rcl}
        \{a\fra (p_1,\ldots,p_k)\}\cup\nabla&\Longrightarrow&
            \{a\fra p_i,\ldots,a\fra p_k\}\cup\nabla\\
        \{a\fra [a]p\}\cup\nabla&\Longrightarrow&\nabla\\
        \{a\fra f(p)\}\cup\nabla&\Longrightarrow&\{a\fra p\}\cup\nabla.
      \end{array}
    \end{array}
  \end{displaymath}
  The rules define a reduction relation on freshness environments. We write
  $\nabla\Longrightarrow\nabla'$ when $\nabla'$ is obtained from $\nabla$ by applying one
  simplification rule, and $\Longrightarrow^*$ for the reflexive and transitive closure of
  $\Longrightarrow$.
\end{definition}

\begin{lemma}
  \label{lem:unique-nf}
  The relation $\Longrightarrow$ is confluent and terminating.
\end{lemma}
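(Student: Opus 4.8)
The plan is to prove confluence and termination of $\Longrightarrow$ separately, then invoke Newman's lemma only if needed—though in fact confluence here should follow from a stronger local property.

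\paragraph*{Termination.}
First I would exhibit a well-founded measure that strictly decreases with each simplification step. The natural candidate is the multiset of sizes of the raw terms appearing on the right-hand side of the assertions in $\nabla$, ordered by the multiset extension of the usual order on natural numbers. Inspecting the six rules of Definition~\ref{def:simplification}: the rules $\{a\fra b\}\cup\nabla\Longrightarrow\nabla$ and $\{a\fra[a]p\}\cup\nabla\Longrightarrow\nabla$ remove an element from the multiset; the rules for $\susp{\pi}{t}$, $[b]p$ and $f(p)$ replace an assertion whose term has size $n$ by one whose term has size $n-1$ (the moderated-term case is the subtle one, since $\peract{\pi^{-1}}{a}$ is still an atom and $\susp{\pi}{t}$ is strictly larger than $t$ under any reasonable size notion); and the product rule replaces an assertion on $(p_1,\dots,p_k)$ by finitely many assertions on the strictly smaller $p_1,\dots,p_k$. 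In every case the multiset strictly decreases in the multiset order, which is well-founded because $<$ on $\mathbb{N}$ is. Hence $\Longrightarrow$ is terminating. One wrinkle to address carefully: $\nabla$ is a \emph{set}, so I must make sure the measure is well-defined when distinct assertions happen to collapse after a step; using the multiset of sizes indexed over the set $\nabla$ (not counting multiplicities beyond what the set gives) keeps this clean, and collapsing can only decrease the measure further.

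\paragraph*{Confluence.}
Given termination, by Newman's lemma it suffices to establish \emph{local} confluence: if $\nabla\Longrightarrow\nabla_1$ and $\nabla\Longrightarrow\nabla_2$ with $\nabla_1\neq\nabla_2$, then $\nabla_1$ and $\nabla_2$ have a common reduct. The two steps either act on distinct assertions of $\nabla$—in which case they trivially commute, since each rule only touches a single assertion and leaves the rest of the environment untouched, so both steps can be performed in either order to reach the same $\nabla'$—or they act on the \emph{same} assertion $a\fra t$. But each syntactic form of $t$ (atom, moderated term, abstraction, product, function application) matches exactly one simplification rule, so there is in fact no critical pair of the second kind: the step on a given assertion is deterministic. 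The only genuine overlap to check is when the same resulting assertion is produced in two ways from syntactically different assertions in $\nabla$ (e.g.\ set-membership collapse), but this only makes the targets closer, not further apart, and is handled by the diamond argument. Thus $\Longrightarrow$ is locally confluent, hence confluent.

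\paragraph*{Main obstacle.}
I expect the only real subtlety to be the moderated-term rule $\{a\fra\susp{\pi}{t}\}\cup\nabla\Longrightarrow\{\peract{\pi^{-1}}{a}\fra t\}\cup\nabla$ in the termination argument: one must pick a size function on raw terms for which $\susp{\pi}{t}$ is strictly larger than $t$ regardless of $\pi$, and confirm that replacing the atom $a$ by $\peract{\pi^{-1}}{a}$ does not affect the measure (atoms all having the same size). Once the size of $\susp{\pi}{t}$ is defined as, say, $1+\mathrm{size}(t)$ (ignoring any contribution from $\pi$, which is legitimate since $\pi$ does not reappear as a term), this is immediate. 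Everything else is a routine case analysis, and since the relation is in fact deterministic on each assertion, confluence is almost trivial modulo the bookkeeping of set-versus-multiset noted above.
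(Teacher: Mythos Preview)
The paper does not give a proof of this lemma; it simply states the result and moves on, having remarked just before Definition~\ref{def:simplification} that the simplification rules and entailment relation are adapted from Fern\'andez and Gabbay~\cite{FG07}. So there is no proof in the paper to compare against.

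Your argument is correct and is the standard one for this kind of result. The termination measure via the multiset extension of term size is exactly right, and your handling of the moderated-term rule (defining the size of $\susp{\pi}{t}$ as $1+\mathrm{size}(t)$ and noting that the left-hand atom is irrelevant to the measure) is the appropriate fix. For confluence, your analysis is accurate: on any single assertion at most one rule applies---the two abstraction rules for $a\fra[b]p$ and $a\fra[a]p$ are mutually exclusive because the paper stipulates that $a$ and $b$ are distinct---so the system is orthogonal, there are no critical pairs, and local confluence is immediate. Newman's lemma then yields confluence. Your caveat about sets versus multisets is well taken but, as you observe, any identification of assertions after a step can only further decrease the measure and cannot create divergence.
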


A freshness assertion is \emph{reduced} iff it is of the form $a\fra a$ or $a \fra x$. We
say that $a\fra a$ is \emph{inconsistent} and $a \fra x$ is \emph{consistent}. An
environment $\nabla$ is \emph{reduced} iff it consists only of reduced assertions. An
environment containing a freshness assertion that is not reduced can always be simplified
using one of the rules in Definition~\ref{def:simplification}. Therefore, by
Lemma~\ref{lem:unique-nf}, an environment $\nabla$ reduces by $\Longrightarrow^*$ to a
unique reduced environment, which we call the \emph{normal form} of $\nabla$, written
$\nf{\nabla}$. An environment $\nabla$ is \emph{inconsistent} iff $\nf{\nabla}$ contains
some inconsistent assertion. We say $\nabla$ \emph{entails} $\nabla'$ (written
$\nabla \vdash \nabla'$) iff either $\nabla$ is an inconsistent environment, or
$\nf{\nabla'}\subseteq\nf{\nabla}$. We write $\vdash\nabla$ iff $\emptyset \vdash \nabla$.

\begin{lemma}
  \label{lem:entails}
  Let $\nabla$ be an environment over $\Sigma$. Then, for every ground substitution
  $\varphi$, the conjunction of the freshness relations represented by
  $\varphi(\nf{\nabla})$ holds iff the conjunction of the freshness relations represented
  by $\varphi(\nabla)$ hold.
\end{lemma}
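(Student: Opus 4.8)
The plan is to argue by induction on the length of a simplification sequence, reducing the general claim to the case of a single simplification step $\nabla \Longrightarrow \nabla'$. By Lemma~\ref{lem:unique-nf} the normal form $\nf{\nabla}$ is reached from $\nabla$ in finitely many steps, so if we show that every single step preserves the conjunction of the freshness relations represented by the (ground-instantiated) environment, then a straightforward induction yields the statement for $\nf{\nabla}$. Thus the crux is: for each of the six simplification rules in Definition~\ref{def:simplification}, and for each ground substitution $\varphi$, the conjunction of freshness relations represented by $\varphi$ applied to the left-hand side holds if and only if the conjunction represented by $\varphi$ applied to the right-hand side holds.

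First I would fix notation: for a ground assertion $a \fra q$, the freshness relation it represents is $a \# \NT{q}$, and for an environment the represented relation is the conjunction over its members. Since $\varphi$ commutes with the term constructors (by definition of $\overline{\varphi}$) and since $\NT{-}$ is a homomorphism of $\Sigma$-structures with the interpretation clauses of Definition~\ref{def:interpretation}, applying $\varphi$ and then $\NT{-}$ to the terms appearing in a rule gives exactly the corresponding nominal-term constructors applied to $\NT{\varphi(\cdot)}$. So it suffices to verify each rule at the level of nominal terms, i.e.\ to check the six equivalences
\begin{itemize}
\item $a \# b \iff \text{true}$ when $a \neq b$ are atoms (here the right side is the empty conjunction), using $\supp(b) = \{b\}$;
\item $a \# \peract{\pi}{\NT{p}} \iff \peract{\pi^{-1}}{a} \# \NT{p}$, using equivariance of $\#$ and of $\supp$ (Eq.~(3.2) and Prop.~2.11 of \cite{Pit13});
\item $a \# \langle b\rangle\NT{p} \iff a \# \NT{p}$ when $a \neq b$, using $\supp(\langle b\rangle\NT{p}) = \supp(\NT{p})\setminus\{b\}$ from Definition~\ref{def:atom-abstraction};
\item $a \# (\NT{p_1},\dots,\NT{p_k}) \iff \bigwedge_i a \# \NT{p_i}$, using that support of a product is the union of the supports;
\item $a \# \langle a\rangle\NT{p} \iff \text{true}$, again from $\supp(\langle a\rangle\NT{p}) = \supp(\NT{p})\setminus\{a\}$;
\item $a \# \NT{f(p)} \iff a \# \NT{p}$, using $\supp(\NT{f}(\NT{p})) \subseteq \supp(\NT{p})$ — and in fact equality is not needed, only the inclusion in the right direction together with the fact that $\NT{f}$ is injective (it is a coproduct injection, by Definition~\ref{def:NT}), so $\supp(\NT{f}(\NT{p})) = \supp(\NT{p})$; alternatively one can invoke equivariance of $\NT{f}$ directly.
\end{itemize}
Each of these is a one-line consequence of the support computations already recorded in the excerpt.

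The main obstacle — really the only subtle point — is the moderated-term rule $\{a \fra \susp{\pi}{t}\} \cup \nabla \Longrightarrow \{\peract{\pi^{-1}}{a} \fra t\} \cup \nabla$, because here the simplification rule manipulates the \emph{raw} term $\susp{\pi}{t}$ syntactically (stripping the suspension and moving $\pi^{-1}$ onto the atom), whereas the semantics sends $\susp{\pi}{t}$ to $\peract{\pi}{\NT{\varphi(t)}}$ via Lemma~\ref{lem:interp-mod} (the clause $\NT{\susp{\pi}{p}} = \peract{\pi}{\NT{\varphi(t)}}$, using that $\overline{\varphi}(\susp{\pi}{t}) = \susp{\pi}{\overline{\varphi}(t)}$). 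One must check that this syntactic manipulation is sound: $a \# \peract{\pi}{\NT{\varphi(t)}}$ iff $a \in \pi\cdot\supp(\NT{\varphi(t)})$ fails, iff $\pi^{-1} a \notin \supp(\NT{\varphi(t)})$, iff $\peract{\pi^{-1}}{a} \# \NT{\varphi(t)}$, where the middle step is exactly equivariance of $\supp$ (or of $\#$). I would spell this case out and remark that the others are entirely analogous, then close by the induction on simplification length described above.
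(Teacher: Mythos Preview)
Your proof is correct and follows the natural approach: induction on the length of the simplification sequence, reducing to a case analysis showing that each of the six rewrite rules in Definition~\ref{def:simplification} preserves (and reflects) validity of the represented freshness relations under any ground substitution. Your treatment of the moderated-term case via equivariance of $\supp$ and of the function-symbol case via injectivity of the coproduct injection $\NT{f_{ij}}=\inj_j$ (so that $\supp(\inj_j(x))=\supp(x)$) is exactly right.

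The paper itself does not give a proof of this lemma: it is stated without argument, the simplification relation and entailment having been explicitly ``adapted'' from Definition~10 and Lemma~15 of \cite{FG07}, and the result is treated as routine. Your write-up is therefore more detailed than anything in the paper, but it matches what one would expect the intended (omitted) proof to be, and there is nothing to contrast methodologically.
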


In particular, if $\vdash \nabla$ then for every ground substitution $\varphi$ the
freshness relations represented by $\varphi(\nabla)$ hold.

We are interested in NTS \cite{PBEGW15}, which consider signatures with base sorts $\Act$
and $\Proc$, with a single atom sort $\Chan$ and with source and residual sorts $\Proc$
and $\Act\times\Proc$ respectively. We let $\SigmaNTS$ be any such signature parametric on
a set $F$ of function symbols that we keep implicit. We let
$\bn:\nTerm{\Sigma}_\Act\to\pset_\omega(\Atom_\Chan)$ be the binding-names function of a
given NTS. From now on we restrict the attention to the NTS of \cite{PBEGW15}, and the
definitions and results to come apply to NRTS/NRTSS over a signature $\SigmaNTS$. We
require that the rules of an NRTSS only contain ground actions $\ell$ and therefore
function $\bn$ is always defined over $\NT{\ell}$. (Recall that we write $\bn(\ell)$
instead of $\bn(\NT{\ell})$ since it is clear in this context that the $\ell$ stands for a
nominal term.) The rule format that we introduce in Definition~\ref{def:alpha-conv-format}
relies on identifying the rules that give rise to transitions with actions $\ell$ such
that $\bn(\ell)$ is non-empty. To this end, we adapt the notion of strict stratification
from \cite{AFGI17,FV03}.

\begin{definition}[Partial strict stratification]
  \label{def:partial-strict stratification}
  Let $\mathcal{R}$ be an NRTSS over a signature $\SigmaNTS$ and $\bn$ be a binding-names
  function. Let $S$ be a partial map from pairs of ground processes and actions to ordinal
  numbers. $S$ is a \emph{partial strict stratification of $\mathcal{R}$ with respect to
    $\bn$} iff
  \begin{enumerate}[(i)]
  \item $S(\varphi(t), \ell) \not= \bot$, for every rule in $\R$ with conclusion
    $t \rel{} (\ell, t')$ such that $\bn(\ell)$ is non-empty and for every ground
    substitution $\varphi$, and
  \item $S(\varphi(u_i),\ell_i)<S(\varphi(t),\ell)$ and $S(\varphi(u_i),\ell_i)\not=\bot$,
    for every rule in $\R$ with conclusion $t\rel{}(\ell,t')$ such that
    $S(\varphi(t),\ell)\not=\bot$, for every premiss $u_i\rel{}(\ell_i,u'_i)$ of $\R$ and
    for every ground substitution $\varphi$.
  \end{enumerate}
  We say a pair $(p,\ell)$ of ground process and action \emph{has order} $S(p,\ell)$.
\end{definition}
The choice of $S$ determines which rules will be considered by the rule format for NRTSSs
of Definition~\ref{def:alpha-conv-format} below, which guarantees that the induced
transition relation satisfies alpha-conversion of residuals and, therefore, the associated
transition relation together with function $\bn$ are indeed an NTS.
We will intend the map $S$ to be such that the only rules whose source and label of the
conclusion have defined order are those that may take part in proof trees of transitions
with some binding atom in the action.

\begin{definition}[Alpha-conversion-of-residuals format]
  \label{def:alpha-conv-format}
  Let $\mathcal{R}$ be an NRTSS over a signature $\SigmaNTS$, $\bn$ be a binding-names
  function and $S$ be a partial strict stratification of $\mathcal{R}$ with respect to
  $\bn$. Assume that all the actions occurring in the rules of $\mathcal{R}$ are
  ground. Let
  \begin{mathpar}
    \inferrule*[right=Ru] {\{u_i\rel{}(\ell_i,u'_i)\mid i\in I\} \qquad\nabla}
    {t\rel{}(\ell,t')}
  \end{mathpar}
  be a rule in $\mathcal{R}$. Let $D$ be the set of variables that occur in the source $t$
  of \textsc{Ru} but do not occur in the premisses $u_i\rel{}(\ell_i,u'_i)$ with $i\in I$,
  the environment $\nabla$ or the target $t'$ of the rule. The rule \textsc{Ru} is in
  \emph{alpha-conversion-of-residuals format with respect to $S$} (\emph{ACR format with
    respect to $S$} for short) iff for each ground substitution $\varphi$ such that
  $S(\varphi(t),\ell)\neq\bot$, there exists a ground substitution $\gamma$ such that
  $\dom(\gamma)\subseteq D$, and for every atom $a$ in the set
  $\Atom\setminus \{c\in\supp(t)\mid \nf{\{c\fra t\}}=\emptyset\}$ and for every atom
  $b\in\bn(\ell)$, the following hold:
  \begin{enumerate}[(i)]
  \item\label{it:one} $\{a\fra t'\} \cup \nabla \vdash \{a\fra u'_i \mid i\in I\}$,
  \item\label{it:two}
    $\{a\fra t'\} \cup \nabla \cup \{a\fra u_i \mid i\in I\} \vdash \{a\fra \gamma(t)\}$,
    and
  \item\label{it:three}
    $\nabla\cup\{b\fra u_i\mid i\in I \land b\in \bn(\ell_i)\} \vdash \{b\fra\gamma(t)\}$.
  \end{enumerate}
  An NRTSS $\mathcal{R}$, together with a binding-names function $\bn$ is in \emph{ACR
    format} iff $\mathcal{R}$ is in equivariant format and there exists a partial strict
  stratification $S$ such that all the rules in $\mathcal{R}$ are in ACR format with
  respect to $S$.
\end{definition}

Given a transition $p\rel{}(\ell,q)$ that unifies with the conclusion of \Ru, the rule
format ensures that any atom $a$ fresh in $(\ell, q)$ is also fresh in $p$, and also that
the binding atom $b$ is fresh in $p$. We have obtained the constraints of the rule format
by considering the variable flow in each node of a proof tree and the freshness relations
that we want to ensure. Constraints~(i) and (ii) cover the case for the freshness relation
$a\#p$ and Constraint~(iii) covers the case for the freshness relation $b\#p$. The purpose
of substitution $\gamma$ is to ignore the variables that occur in the source of a rule but
are dropped everywhere else in the rule.
Constraints~(i) and (ii) are not required for
atoms $a$ that for sure are fresh in $p$, and this explains why the $a$ in the rule format
ranges over $\Atom\setminus \{c\in\supp(t)\mid \nf{\{c\fra t\}}=\emptyset\}$.  For
instance, take rule \textsc{ResB} from Section~\ref{sec:early-pi-calculus}. Condition
$\{c\fra (\boutAA[a]{b},\newPA[{[c]y}]),$
{$c\fra \boutAA[a]{b}\} \vdash\{c\fra (\boutAA[a]{b},y)\}$} does not hold because
$c\fra [c]y$ does not entail that $c\fra y$. However, for a transition
$\NT{\newPA[{[c]p}]}\rel{} \NT{(\ell,\newPA[{[c]p'}])}$, $c$ is fresh in
$\NT{\newPA[{[c]p}]}$ even if $c$ is not fresh in $\NT{p}$. %
\begin{theorem}[Rule format for NTSs]
  \label{the:alpha-conversion}
  Let $\mathcal{R}$ be an NRTSS. If $\mathcal{R}$, together with the binding-names
  function $\bn$, is in ACR format then the NRTS induced by $\mathcal{R}$ and $\bn$
  constitute an NTS---that is, the transition relation induced by $\mathcal{R}$ is
  equivariant and satisfies alpha-conversion of residuals.
\end{theorem}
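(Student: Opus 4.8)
The statement combines two properties of the transition relation induced by $\mathcal{R}$: equivariance and alpha-conversion of residuals. Equivariance is immediate from Theorem~\ref{thm:rule-format-equivariance}, since the ACR format requires $\mathcal{R}$ to be in equivariant format; so the real work is alpha-conversion of residuals. The plan is to fix a provable transition $p\rel{}(\ell,q)$, an atom $a\in\bn(\ell)$, and a fresh atom $b\#(\ell,q)$, and to show that $p\rel{}\peract{\tr{a}{b}}{(\ell,q)}$ is provable. I would first reduce to the case where $\bn(\ell)$ is non-empty (otherwise there is nothing to prove) and then exploit the partial strict stratification $S$: the hypothesis $\bn(\ell)\neq\emptyset$ together with clause~(i) of Definition~\ref{def:partial-strict stratification} guarantees that the root $(p,\ell)$ of any proof tree witnessing $p\rel{}(\ell,q)$ has a defined order $S(p,\ell)$, and clause~(ii) guarantees that this order strictly decreases up the tree along premisses. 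This gives a well-founded measure on which to run an induction.

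The core of the argument is an induction on the ordinal $S(p,\ell)$ (equivalently, on the structure of a fixed proof tree of $p\rel{}(\ell,q)$, using $S$ to bound its relevant height). At the root, the proof tree uses some rule \textsc{Ru} with conclusion $t\rel{}(\ell',t')$ and a ground substitution $\varphi$ with $\varphi(t\rel{}(\ell',t'))=p\rel{}(\ell,q)$, $\varphi(u_i\rel{}(\ell_i,u'_i))$ provable for each $i\in I$, and $a_j\#\NT{\varphi(v_j)}$ for each assertion in $\nabla$. Since $S(\varphi(t),\ell')=S(p,\ell)\neq\bot$, the ACR-format hypothesis applies to \textsc{Ru} and $\varphi$: it supplies a ground substitution $\gamma$ with $\dom(\gamma)\subseteq D$, and the three entailments~(i)--(iii). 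The first key step is to extract the freshness facts we need: since $b\#(\ell,q)=\peract{\ }{(\varphi(\ell'),\varphi(t'))}$ and $a\in\bn(\ell)=\bn(\varphi(\ell'))=\bn(\ell')$ (as $\ell'$ is ground), the atom $b$ is fresh in $\varphi(t')$ and in each $\varphi(\ell_i)$-component appearing in $\ell'$; combining this with entailment~(i) and Lemma~\ref{lem:entails} yields $b\#\NT{\varphi(u'_i)}$ for all $i$, and combining entailments~(ii) and (iii) with Lemma~\ref{lem:entails} yields both $b\#\NT{\varphi'(t)}$-type facts and, crucially, that $b$ is fresh in the relevant part of $p$ (here one uses that $a$ ranges over $\Atom\setminus\{c\in\supp(t)\mid\nf{\{c\fra t\}}=\emptyset\}$, so that for the finitely many excluded atoms $b$ is already fresh in $p$ for structural reasons). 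The second key step is to build the new proof tree: define $\varphi'=\peract{\tr{a}{b}}{\varphi}\circ[\,\gamma\text{-correction on }D\,]$ — more precisely, compose the transposition $\tr{a}{b}$ acting on the residual side with $\varphi$ suitably modified on the dropped variables in $D$ — apply \textsc{Ru} (which is in $\mathcal{R}$, and by equivariant format so is $\peract{\tr{a}{b}}{\textsc{Ru}}$ if needed) with this new substitution, discharge the premisses by the induction hypothesis applied to each $\varphi(u_i\rel{}(\ell_i,u'_i))$ at strictly smaller order, and discharge the freshness environment $\nabla$ using Lemma~\ref{lem:entails} and the freshness facts just derived.

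The main obstacle I expect is the bookkeeping around the substitution $\gamma$ and the set $D$ of "dropped" variables. The transposition $\tr{a}{b}$ must be pushed through the conclusion while keeping the source $\varphi(t)=p$ literally unchanged — since $b\#p$ we have $\peract{\tr{a}{b}}{p}=p$, but only after we know $a\#p$ as well, and $a\in\bn(\ell)\subseteq\supp(\ell)$ need not be fresh in $p$ a priori; this is exactly what entailments~(ii)--(iii) are engineered to recover, and getting the quantifier over $a$ right (the exclusion set) is delicate. One must check that the new substitution $\varphi'$ agrees with $\varphi$ outside $D$ up to the transposition, so that the premisses and environment transform as required, while on $D$ it uses $\gamma$ to supply ground witnesses that make $\peract{\tr{a}{b}}{\varphi'(t)}=p$ hold on the nose. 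Verifying that all premisses remain provable requires the induction hypothesis, which is legitimate precisely because $S$ strictly decreases (clause~(ii)) and stays defined along premisses. The remaining steps — that $\bn$ is equivariant and defined on ground actions, that Lemma~\ref{lem:entails} transfers entailments to freshness relations under $\varphi$, and that interpretation commutes with permutations (Lemma~\ref{lem:interp-mod}) — are routine glue. I would conclude by assembling the new proof tree and invoking Definition~\ref{def:proof-tree} to conclude $p\rel{}\peract{\tr{a}{b}}{(\ell,q)}$ is provable, hence the induced relation satisfies alpha-conversion of residuals and, being equivariant, is an NTS.
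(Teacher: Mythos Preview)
Your high-level structure matches the paper's: induction along the partial strict stratification $S$, use of constraints~(i)--(iii) via Lemma~\ref{lem:entails}, the $\gamma$-correction on the dropped variables $D$, and an appeal to the equivariant format for the final step. But the inductive statement you chose does not do the work you need from it.

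You run the induction directly on alpha-conversion of residuals and plan to ``discharge the premisses by the induction hypothesis''. Two things go wrong here. First, the premisses of the new proof tree are simply the $\tr{a}{b}$-images of the old ones, and those are provable outright by Theorem~\ref{thm:rule-format-equivariance}; no induction is needed for that. Second, and more importantly, to invoke constraint~(ii) you need $b\#\NT{\varphi(u_i)}$ (freshness of the fresh atom in each premiss \emph{source}), and to invoke constraint~(iii) you need $a\#\NT{\varphi(u_i)}$ for the premisses whose action binds $a$ (your $a$ is the binding atom; note your naming is swapped relative to Definition~\ref{def:alpha-conv-format}). Constraint~(i) only gives you freshness in the premiss \emph{targets} $u'_i$; passing from targets to sources is precisely where the induction hypothesis must be invoked. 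But your IH is alpha-conversion of residuals for the premiss transition, and that statement does not yield freshness of either atom in the premiss source---it only produces another provable transition with a permuted residual.

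The paper resolves this by separating concerns: it first proves, by induction on $S(\varphi(\gamma(t)),\ell)$, the two freshness relations $b\#\NT{\varphi(\gamma(t))}$ and $a\#\NT{\varphi(\gamma(t))}$ (fresh atom and binding atom respectively, in the paper's naming). The IH is the freshness lemma itself: from $b\#\NT{\varphi(u'_i)}$ (via constraint~(i)) one obtains $b\#\NT{\varphi(u_i)}$ by IH on the premiss at strictly smaller order, then constraint~(ii) closes the step; similarly constraint~(iii) handles the binding atom. Once both atoms are fresh in $\NT{\varphi(\gamma(t))}$, a single non-inductive application of equivariance gives the transposed transition with unchanged source, and alpha-conversion of residuals follows. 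Restructure your induction so that the inductive statement is these freshness facts, and derive alpha-conversion afterwards.
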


\begin{proof}[Sketch of the proof]
  Given a transition $\NT{\varphi(t)}\rel{}\NT{\varphi((\ell,t'))}$, we first prove the
  freshness relations $a\#\NT{\varphi(\gamma(t))}$ and $b\#\NT{\varphi(\gamma(t))}$. Both
  relations are proven by induction on $S(\varphi(\gamma(t)),\ell)$, and by analysing the
  variable flow in the rule unifying with $\varphi(t)\rel{}\varphi(\ell,t')$. For the
  first relation, we assume $a\#\NT{\varphi(t')}$, use Constraint~(i) to prove that
  $a\#\NT{\varphi(u'_i)}$ for each target $u'_i$ of a premiss, apply the induction
  hypothesis to obtain $a\#\NT{\varphi(\gamma(u_i)}$ for each source of a premiss $u_i$,
  and use Constraint~(ii) to conclude that $a\#\NT{\varphi(\gamma(t))}$. For the second
  relation, the induction hypothesis ensures that $b\#\NT{\varphi(\gamma(u_i))}$ for each
  source $u_i$ of a premiss having $b$ as a binding name, and we use Constraint~(iii) to
  conclude that $b\#\NT{\varphi(\gamma(t))}$. From these two freshness relations it is
  straightforward to prove that
  $\NT{\varphi(t)}\rel{}\peract{\tr{a}{b}}{\NT{\varphi((\ell,t'))}}$ and we are done.
\end{proof}

\section{Example of application to the early $\pi$-calculus}
\label{sec:early-pi-calculus}
Consider the NRTSS $\mathcal{R}$ for the early $\pi$-calculus \cite{MPW92} over a
signature $\SigmaNTS$ where $F$ is the set of function symbols from
Example~\ref{ex:pi-calculus}.  Below we collect an excerpt of the rules, where
$a,b,c\in \Atom_\Chan$ and $\ell$ is a ground action:
\begin{mathpar}
  \inferrule*[right=In]
  {b\fra [c]x}
  {\inPA[a]{[c]x}\rel{}\resAF{\inAA[a]{b}}{\susp{\tr{c}{b}}{x}}}
  \and
  \inferrule*[right=Open]
  {x\rel{}\resAF{\outAA[a]{b}}{y}\qquad b\fra a}
  {\newPA[{[b]x}]\rel{}\resAF{\boutAA[a]{b}}{y}}
  \\
  \inferrule*[right=SumL]
  {x_1\rel{}\resAF{\ell}{y_1}}
  {\sumPA[x_1]{x_2} \rel{}\resAF{\ell}{y_1}}
  \and
  \inferrule*[right=ParL,left=$\ell\not\in\{\mathit{boutA}(a{,}\,b)\}$]
  {x_1\rel{}\resAF{\ell}{y_1}}
  {\parPA[x_1]{x_2} \rel{}\resAF{\ell}{(\parPA[y_1]{x_2})}}
  \\
  \inferrule*[right=ParResL]
  {x_1 \rel{} (\boutAA[a]{b},y_1)\qquad b\fra x_2}
  {\parPA[x_1]{x_2} \rel{} (\boutAA[a]{b},(\parPA[y_1]{x_2}))}
  \\
  \inferrule*[right=Out]
  { }
  {\outPA[a]{b}{x}\rel{}\resAF{\outAA[a]{b}}{x}}
  \and
  \inferrule*[right=Rep]
  {x\rel{}\resAF{\ell}{y}}
  {\repPA[x] \rel{}\resAF{\ell}{(\parPA[y]{\repPA[x]})}}
  \\
  \inferrule*[right=CloseL]
  {x_1\rel{}\resAF{\boutAA[a]{b}}{y_1}
    \qquad x_2\rel{}\resAF{\inAA[a]{b}}{y_2} \qquad b\fra x_2}
  {\parPA[x_1]{x_2}\rel{}\resAF{\tauAA}{\newPA[{[b](\parPA[y_1]{y_2})}]}}
\end{mathpar}

An input process $\NT{\inPA[a]{[c]p}}$ can perform a transition to a process
$\NT{\peract{\tr{c}{b}}{p}}$ that is obtained by substituting a channel name $b$ received
through channel $a$ for channel name $c$ in $p$. In the rule \textsc{In}, the moderated
term $\susp{\tr{c}{b}}{x}$ needs to be used in order to indicate that permutation
$\tr{c}{b}$ will be performed over the term substituted for variable $x$.

The rule \textsc{CloseL} specifies the interaction of a process such as
$\NT{\newPA[{[b](\outPA[a]{b}{p})}]}$, which exports a private channel name $b$ through
channel $a$, composed in parallel with an input process such as $\NT{\inPA[a]{[c]q}}$ that
reads through channel $a$. The private name $b$ is exported and the resulting process
$\NT{\newPA[{[b](\parPA[p]{\peract{\tr{c}{b}}{q}})}]}$ is the parallel composition of
processes $p$ and $q$ where atom $b$ is restricted. For illustration, consider the raw
terms $t\equiv \newPA[{[b](\outPA[a]{b}{p})}]$ and $t'\equiv(\boutAA[a]{b},p)$. The
transition $\NT{t}\rel{}\NT{t'}$ is provable in $\mathcal{R}$ by the following proof tree:
\begin{mathpar}
  \inferrule*[right=Open{\normalsize$,\ \textup{as}\ b\#a$}.]
  {
    \inferrule*[right=Out]
    { }
    {\NT{\outPA[a]{b}{p}}\rel{}\NT{(\outAA[a]{b},p)}}
  }
  {\NT{\newPA[{[b](\outPA[a]{b}{p})}]}\rel{}\NT{(\boutAA[a]{b},p)}}
\end{mathpar}

Notice that the nodes of the proof tree above are labelled by transitions involving
nominal terms. Therefore, if we were to start with the raw term
$q\equiv\newPA[{[c](\outPA[a]{c}{p})}]$---which is alpha-equivalent to $t$---then the
transition $\NT{q}\rel{}\NT{t'}$ would have the same proof tree as above, since $\NT{t}$
and $\NT{q}$ are the same nominal term. This contrasts with the related work
\cite{CMRG12,FG07}, which considers raw terms in the model of computation and
instruments alpha-conversion explicitly in the specification system.

We use the rule format of Definition~\ref{def:alpha-conv-format} to show that
$\mathcal{R}$, together with equivariant function
$\bn(\ell)=\{b\mid \ell = \boutAA[a]{b}\}$ specifies an NTS. We consider the following
partial strict stratification:
\begin{displaymath}
\begin{array}{rcl}
S(\outPA[a]{b}{p},\outAA[a]{b})&=&0\\
S(\parPA[p]{q},\ell)&=&1+\max\{S(p,\ell),S(q,\ell)\}
\quad \ell\in\{\boutAA[a]{b},\outAA[a]{b}\}\\
S(\sumPA[p]{q},\ell)&=&1+\max\{S(p,\ell),S(q,\ell)\}
\quad \ell\in\{\boutAA[a]{b},\outAA[a]{b}\}\\
S(\repPA[p],\ell)&=&1+S(p,\ell)
\quad \ell\in\{\boutAA[a]{b},\outAA[a]{b}\}\\
S(\newPA[{[c]p}],\ell)&=&1+S(p,\ell)
\quad \ell\in\{\boutAA[a]{b},\outAA[a]{b}\}\ \text{and}\ c \not\in\{a,b\}\\
S(\newPA[{[b]p}],\boutAA[a]{b})&=&1+S(p,\outAA[a]{b})\\
S(p,\ell)&=&\bot\quad \text{o.w.}
\end{array}
\end{displaymath}

We check that $\mathcal{R}$ is in ACR format as follows. The only rules in $\mathcal{R}$
whose sources and actions unify with pairs of processes and actions that have defined
order are \textsc{Out}, \textsc{Open} and \textsc{ParResL}, and the instance of rule
\textsc{ParL} where $\ell=\outAA[a]{b}$, and the instances of rules \textsc{SumL},
\textsc{Rep} and \textsc{Res} where $\ell\in\{\boutAA[a]{b},\outAA[a]{b}\}$ (and the
corresponding instances of the symmetric versions \textsc{ParResR}, \textsc{ParR} and
\textsc{SumR}, which are omitted in the excerpt). We will only check the ACR-format for
rules \textsc{Out}, \textsc{SumL} and \textsc{Open}.

For rule \textsc{Out}, we have an empty set of premisses and the set $D$ of atoms that are
in $\supp(\outPA[a]{b}{x})$ but are not in $\supp(\outAA[a]{b},x)$ is empty. Therefore we
can do away with substitution~$\gamma$. There is no atom $a$ such that
$\nf{\{a\fra \outPA[a]{b}{x}\}=\emptyset}$ and the set $\bn(\outAA[a]{b})$ is empty. We
only need to check that for every atom $c$,
$\{c\fra (\outAA[a]{b},x)\}\vdash\{c\fra \outPA[a]{b}{x}\}$. For atoms
$c\in\supp(\outAA[a]{b},x)$ the obligation of the rule format vacuously holds, and
therefore it is enough to pick an atom $c$ fresh in the rule and check that
$\{c\fra (\outAA[a]{b},x)\}\vdash\{c\fra \outPA[a]{b}{x}\}$, which is straightforward.

For rule \textsc{SumL}, we first check the instance where $\ell=\boutAA[a]{b}$. We have
premiss $x_1\rel{}(\boutAA[a]{b},y_1)$ and the set $D$ contains $x_2$. We pick $\gamma$
such that $\gamma(x_2)=\nullPA$. There is no atom $a$ such that
$\nf{\{a\fra \sumPA[x_1]{x_2}\}}=\emptyset$ and the set $\bn(\boutAA[a]{b})$ contains atom
$b$. Again, it is enough to pick atom $c$ fresh in the rule and check that
\begin{displaymath}
\begin{array}{c}
\{c\fra (\boutAA[a]{b},y_1)\}
\vdash\{c\fra (\boutAA[a]{b},y_1)\}\qquad\text{and}\\
\{c\fra (\boutAA[a]{b},y_1), c\fra x_1\}\vdash\{c\fra \gamma(\sumPA[x_1]{x_2})\}
\qquad\text{and}\\
\{b\fra x_1\}\vdash\{b\fra \gamma(\sumPA[x_1]{x_2})\},
\end{array}
\end{displaymath}
which holds since $\gamma(\sumPA[x_1]{x_2})=\sumPA[x_1]{\nullPA}$ and $b\fra \nullPA$
reduces to the empty set.

Now we check the instance where $\ell=\outAA[a]{b}$. We have premiss
$x_1\rel{}(\outAA[a]{b},y_1)$ and the set $D$ and the substitution $\gamma$ are the same
as before. There is no atom $a$ such that $\nf{\{a\fra \sumPA[x_1]{x_2}\}}=\emptyset$ and
the set $\bn(\outAA[a]{b})$ is empty. Again, it is enough to pick atom $c$ fresh in the
rule and check that $\{c\fra (\outAA[a]{b},y_1)\} \vdash\{c\fra (\outAA[a]{b},y_1)\}$ and
$\{c\fra (\outAA[a]{b},y_1), c\fra x_1\}\vdash\{c\fra \gamma(\sumPA[x_1]{x_2})\}$, which
holds as before.

For rule \textsc{Open} the set $D$ is empty and $\nf{\{b\fra \newPA[{[b]x}]\}}=\emptyset$.
It is enough to pick atom $c$ fresh in the rule (and therefore different from $b$) and
check that
\begin{displaymath}
\begin{array}{c}
\{c\fra (\boutAA[a]{b},y),b\fra a\}
\vdash\{c\fra (\boutAA[a]{b},y)\}\qquad\text{and}\\
\{c\fra (\boutAA[a]{b},y), b\fra a, c\fra x\}
\vdash\{c\fra \newPA[{[b]x}])\}
\qquad\text{and}\\
\{b\fra x, b \fra a\}\vdash\{b\fra \newPA[{[b]x}]\},
\end{array}
\end{displaymath}
which holds because $b\fra \newPA[{[b]x}]$ reduces to the empty set.

Atoms $a$, $b$ and $c$ in the specification of $\mathcal{R}$ range over $\Atom_\Chan$, and
thus $\mathcal{R}$ is in equivariant format. Therefore $\R$ is in ARC format. By
Theorem~\ref{the:alpha-conversion} the NRTS induced by $\mathcal{R}$, together with
function $\bn$, constitute an NTS of Definition~\ref{def:nts}.

\section{Conclusions and future work}
\label{sec:conclusions}

The work we have presented in this paper stems from the recently proposed Nominal SOS
(NoSOS) framework~\cite{CMRG12} and from earlier proposals for nominal logic
in~\cite{UPG04,CP07,GM09}. It is by no means the only approach studied so far in the
literature that aims at a uniform treatment of binders and names in programming and
specification languages. Other existing approaches that accommodate variables and binders
within the SOS framework are those proposed by Fokkink and Verhoef in~\cite{FV98}, by
Middelburg in~\cite{Mid01,Mid03}, by Bernstein in~\cite{Ber98}, by Ziegler, Miller and
Palamidessi in~\cite{ZMP06} and by Fiore and Staton in~\cite{FS09} (originally, by Fiore
and Turi in~\cite{FT01}). The aim of all of the above-mentioned frameworks is to establish
sufficient syntactic conditions guaranteeing the validity of a semantic result (congruence
in the case of~\cite{Ber98,FS09,Mid01,ZMP06} and conservativity in the case
of~\cite{FV98,Mid03}). In addition, Gabbay and Mathijssen present a nominal axiomatisation
of the $\lambda$-calculus in~\cite{GM10}. None of these approaches addresses equivariance
nor the property of alpha-conversion of residuals in \cite{PBEGW15}.

Our current proposal aims at following closely the spirit of the seminal work on nominal
techniques by Gabbay, Pitts and their co-workers, and paves the way for the development of
results on rule formats akin to those presented in the aforementioned references.  Amongst
those, we consider the development of a congruence format for the notion of bisimilarity
presented in~\cite[Def.~2]{PBEGW15} to be of particular interest. The logical
characterisation of bisimilarity given in~\cite{PBEGW15} opens the intriguing possibility
of employing the divide-and-congruence approach from~\cite{FGW06} to obtain an elegant
congruence format and a compositional proof system for the logic.

In the NTSs of Parrow {\etal}~\cite{PBEGW15}, scope opening is modelled by the property of
alpha-conversion of residuals. We are currently exploring an alternative in which scope
opening is encoded by a \emph{residual abstraction} of sort $[\Chan](\Act\times\Proc)$.
We have developed mutual, one-to-one translations between the NTSs and the NRTSs with
residual abstractions. The generality of our NRTSs also allows for neat specifications of
variants of the $\pi$-calculus such as Sangiorgi's internal $\pi$-calculus~\cite{San96}.

Developing rule formats for SOS is always the result of a trade-off between ease of
application and generality. Our rule format for alpha-conversion of residuals in
Definition~\ref{def:alpha-conv-format} is no exception and might be generalised in various
ways. For instance, the quantification on atom $a$ in conditions (i) and (ii), and the use
of substitution $\gamma$ might be made more general by a finer analysis of the variable
flow in a rule.  Another generalisation of the rule format would consider possibly open
raw actions.

Finally, we are developing rule formats for properties other than alpha-conversion of
residuals. One such rule format ensures a \emph{non-dropping property} for NRTSs to the
effect that, in each transition, the support of a state is a subset of the support of its
derivative.


\end{document}